\newif\ifmapx
\edef\jobnametmp{\expandafter\string\csname empirical_apx\endcsname}
\edef\jobnameapx{\expandafter\mkillslash\jobnametmp}
\edef\jobnameexpand{\jobname}
\long\def\apxonly#1{\ifmapx{\color{blue}#1}\fi}
\title{Nonasymptotic noisy lossy source coding}
\author{%
Victoria Kostina$^{\ast}$, Sergio Verd\'{u}$^{\dag}$\\[0.5em]
{\small\begin{minipage}{\linewidth}\begin{center}
\begin{tabular}{ccc}
$^{\ast}$California Institute of Technology & \hspace*{0.5in} & $^{\dag}$Princeton University \\
Pasadena, CA 91125, USA && Princeton, NJ 08544, USA\\
\url{vkostina@caltech.edu} && \url{verdu@princeton.edu}
\end{tabular}
\end{center}\end{minipage}}
\thanks{
This work was supported in part by the National Science Foundation (NSF)
under Grant CCF-1016625 and by the Center for Science of Information
(CSoI), an NSF Science and Technology Center, under Grant CCF-0939370.

This work was presented in part at the 2013 IEEE Information
Theory Workshop \cite{kostina2013ITWnoisysc}. 
}
}
\begin{document}
\maketitle
\begin{abstract}
This paper shows new general nonasymptotic achievability and converse bounds and performs their dispersion analysis for the lossy compression problem in which the compressor observes the source through a noisy channel.  While this problem is asymptotically equivalent to a noiseless lossy source coding problem with a modified distortion function, nonasymptotically there is a noticeable gap in how fast their minimum achievable coding rates approach the common rate-distortion function, as evidenced both by the refined asymptotic analysis (dispersion) and the numerical results. The size of the gap between the dispersions of the noisy problem and the asymptotically equivalent noiseless problem depends on the stochastic variability of the channel through which the compressor observes the source. 

\end{abstract}

\begin{IEEEkeywords}
Achievability, converse, finite blocklength regime, lossy data compression, noisy data compression, noisy source coding, noisy sources, strong converse, dispersion, memoryless sources,  Shannon theory.
\end{IEEEkeywords}

\section{Introduction}
Consider a lossy compression setup in which the encoder has access only to a noise-corrupted version $X$ of a source $S$, and we are interested in minimizing (in some stochastic sense) the distortion $\mathsf d(S, Z)$ between the true source $S$ and its rate-constrained representation $Z$ (see Fig. \ref{fig:noisysc}). This problem arises if the object to be compressed is the result of an uncoded transmission over a noisy channel, or if the observed data is subject to errors inherent to the measurement system. Examples include speech in a noisy environment and photographs corrupted by noise introduced by the image sensor. Since we are concerned with reproducing the original noiseless source rather than preserving the noise, the distortion measure is defined with respect to the clean source. 

The noisy source coding setting was introduced by Dobrushin and Tsybakov \cite{dobrushin1962addnoise}, who showed that, when the goal is to minimize the average distortion, the noisy source coding problem is asymptotically equivalent to a certain surrogate noiseless source coding problem.  Specifically, for a stationary memoryless source with single-letter distribution $P_{\mathsf S}$ observed through a stationary memoryless channel with single-input transition probability kernel $P_{\mathsf X | \mathsf S}$, the noisy rate-distortion function under a separable distortion measure is given by
\begin{align}
R(d) &= \min_{
\substack
{
P_{\mathsf Z| \mathsf X} \colon \\
\E{\mathsf d( \mathsf S, \mathsf Z)} \leq d\\
\mathsf S - \mathsf X - \mathsf Z
}
}
I(\mathsf X; \mathsf Z)\\
&=
\min_{
\substack
{
P_{\mathsf Z|\mathsf X} \colon \\
\E{\bar{\mathsf d}(\mathsf X, \mathsf Z)} \leq d
}
}
I(\mathsf X; \mathsf Z) \label{Rdnoisy},
\end{align}
where the surrogate per-letter distortion measure is
\begin{equation}
 \bar {\mathsf d}(a, b) = \E{ \mathsf d(\mathsf S, b) | \mathsf X = a}, \label{dbarintro}
\end{equation}
and  $\mathsf S - \mathsf X - \mathsf Z$ means that $\mathsf S$, $\mathsf X$ and $\mathsf Z$ form a Markov
chain in this order. 
The expression for $R(d)$ in \eqref{Rdnoisy} implies that, in the limit of infinite blocklengths, the problem is equivalent to a conventional (noiseless) lossy source coding problem where the distortion measure is the conditional average of the original distortion measure given the noisy observation of the source. Berger \cite[p.79]{berger1971rate} used the surrogate distortion measure \eqref{dbarintro} to streamline the proof of \eqref{Rdnoisy}. Witsenhausen \cite{witsenhausen1980indirect} explored the capability of distortion measures defined through conditional expectations such as in \eqref{dbarintro} to treat various so-called indirect rate distortion problems. Sakrison \cite{sakrison1968sourceencoding} showed that if both the source and its noise-corrupted version take values in a separable Hilbert space and the fidelity criterion is mean-squared error, then asymptotically, an optimal code can be constructed by first obtaining a minimum mean-square estimate of the source sequence based on its noisy observation, and then compressing the estimated sequence as if it were noise-free. Wolf and Ziv \cite{wolf1970transmissionnoisy} observed that Sakrison's result holds even nonasymptotically, namely, that the minimum average distortion achievable in one-shot noisy compression of the object $S$ can be written as 
\begin{equation}
D^\star(M) = \E{ |S - \E{S|X}|^2} + \inf_{\mathsf f, \mathsf c}\E{ | \mathsf c( \mathsf f(X) ) - \E{S|X}|^2 } \label{Dstarmse},
\end{equation}
where the infimum is over all encoders $\mathsf f \colon \mathcal X \mapsto \left\{1, \ldots, M \right\}$ and all decoders $\mathsf c \colon \left\{1, \ldots, M \right\} \mapsto \widehat {\mathcal M}$, and $\mathcal X$ and $\widehat {\mathcal M}$ are the alphabets of the channel output and the decoder output, respectively. It is important to note that the choice of the mean-squared error distortion  is crucial for the validity of the additive decomposition in \eqref{Dstarmse}. For vector quantization of a Gaussian signal corrupted by an additive independent Gaussian noise under weighted squared error distortion measure, Ayano\u{g}lu \cite{ayanoglu1990quantizationnoisy}  found explicit expressions for the optimum quantization rule. Wolf and Ziv's result in \eqref{Dstarmse} was extended to waveform vector quantization under weighted quadratic distortion measures and to autoregressive vector quantization under the Itakura-Saito distortion measure by Ephraim and Gray \cite{ephraim1988unified}, and by Fisher el al. \cite{fischer1990estimation}, who studied a model in which both encoder and decoder have access to the history of their past input and output blocks, allowing exploitation of inter-block dependence. Thus, the cascade of the optimal estimator followed by the optimal compressor achieves the minimum average distortion in those settings as well.  

For lossy compression of a discrete memoryless source (DMS) corrupted by discrete memoryless noise,   Weissman and Merhav \cite{weissman2002tradeoffs,weissman2002limited} studied the best attainable tradeoff between the exponential rates of decay of the probabilities that the codeword length and the cumulative distortion exceeds respective thresholds  \cite{weissman2002tradeoffs} and proposed a sequential coding scheme  for sequential compression of a source sequence corrupted by noise \cite{weissman2002limited}. The findings in \cite{weissman2002tradeoffs} imply, in particular, that the noisy excess-distortion exponent cannot be obtained from that of a clean surrogate source. Weissman \cite{weissman2004universally} went on to study universally attainable error exponents in noisy source coding.

Under the logarithmic loss distortion measure \cite{courtade2014multiterminal}, the noisy source coding problem reduces to the information bottleneck problem \cite{tishby1999bottleneck}. Indeed, in the information bottleneck method, the goal is to minimize $I(X; Z)$ subject to the constraint that $I(S; Z)$ exceeds a certain threshold. Both problems are equivalent because the noisy rate-distortion function under logarithmic loss  is given by \eqref{Rdnoisy} in which $\E{\bar{\mathsf d}(\mathsf X, \mathsf Z)}$  is replaced by $H(\mathsf S| \mathsf Z)$. The solution to the information bottleneck problem thus becomes the answer to a fundamental limit, namely, the asymptotically minimum achievable noisy source coding rate under logarithmic loss.

In this paper, we give new nonasymptotic achievability and converse bounds for the noisy source coding problem, which generalize the noiseless source coding bounds in \cite{kostina2011fixed}. We evaluate those bounds numerically in special cases to illuminate their tightness 
for all but very small blocklengths, a regime for which Shannon's random selection of reproduction
points ceases to be near-optimal. We give a refined asymptotic analysis of those bounds.  Specifically, as in \cite{kostina2011fixed,ingber2011dispersion}, we fix the tolerable probability $\epsilon$ of exceeding distortion $d$ and we study how fast the rate-distortion function can be approached as the length of the data block we are coding over increases.\footnote{We refer the reader to \cite{kostina2011fixed} for a discussion of differences between the fidelity criteria of excess and average distortion in the nonasymptotic regime.} This regime is different from the large deviations asymptotics studied in \cite{weissman2002tradeoffs} in which a rate strictly greater than the rate-distortion function is fixed, and the excess distortion probability vanishes exponentially with $k$.  
For noiseless source coding of stationary memoryless sources with separable distortion measure, we showed previously \cite{kostina2011fixed} (see also \cite{ingber2011dispersion} for an alternative proof in the finite alphabet case) that the minimum number $M$ of representation points compatible with a given probability $\epsilon$ of exceeding distortion threshold $d$ at blocklength $k$ satisfies
\begin{equation}
\log M^\star(k, d, \epsilon) = k R(d) + \sqrt {k \mathcal V(d)}\Qinv{\epsilon} + \bigo{\log k} \label{2ordernoiseless},
\end{equation}
where $\mathcal V(d)$ is the rate-dispersion function, explicitly identified in \cite{kostina2011fixed}, and $\Qinv{\cdot}$ denotes the inverse of the complementary standard Gaussian cdf.  In this paper, we show that for noisy source coding of a discrete stationary memoryless source over a discrete stationary memoryless channel under a separable distortion measure, $\mathcal V(d)$ in \eqref{2ordernoiseless} is replaced by the noisy rate-dispersion function $\tilde {\mathcal V}(d) $, which can be expressed as
\begin{equation}
 \tilde {\mathcal V}(d)  = {\mathcal V}(d) +  \lambda^{\star 2} \Var{ \sd(\mathsf S, \mathsf Z^\star) | \sX, {\mathsf Z}^\star} \label{Vnoisydecomp} ,
\end{equation}
where 
   $\lambda^\star = - R^\prime(d)$, $\mathcal V(d)$ is the rate-dispersion function of the surrogate rate-distortion setup,  $\mathsf Z^\star$ denotes the reproduction random variable that achieves the rate-distortion function \eqref{Rdnoisy}, and the {\it conditional variance} of random variable $U$ conditioned on $V$ is defined as
\begin{equation}
\Var{ U | V} \triangleq \E{(U - \E{U | V})^2}. 
\end{equation}

 Note that $\tilde{\mathcal V}(d)$ cannot be expressed solely as a function of the source distribution and the surrogate distortion measure.  Thus, the noisy coding problem is, in general, not equivalent to the noiseless coding problem with the surrogate distortion measure.  Essentially, the reason is that taking the expectation in \eqref{dbarintro} dismisses the randomness introduced by the noisy channel, which cannot be neglected in the  analysis of the dispersion. Indeed, as seen from \eqref{Vnoisydecomp}, the difference between $\mathcal V(d)$ and $\tilde {\mathcal V}(d)$ is due to the stochastic variability of the channel from $S$ to $X$. 

The rest of the paper is organized as follows. After introducing the basic definitions in Section \ref{sec:defn}, we proceed to show new general nonasymptotic converse and achievability bounds in Sections \ref{sec:C} and \ref{sec:A}, respectively, along with their asymptotic analysis in Section \ref{sec:2order}. Finally, the example of a binary source observed through an erasure channel is discussed in Section \ref{sec:BES}. 

\begin{figure}
\includegraphics[width=1\linewidth]{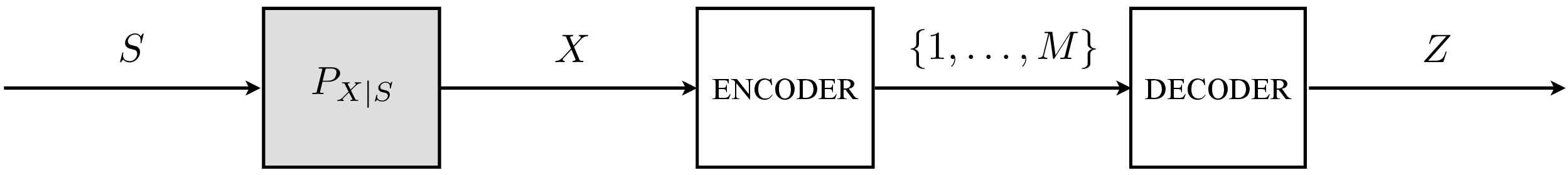}
\caption{Noisy source coding.}
\label{fig:noisysc}
\end{figure}

\section{Definitions}
\label{sec:defn}
Consider the one-shot setup in Fig. \ref{fig:noisysc} where we are given the distribution $P_S$ on the alphabet $\mathcal M$ and the transition probability kernel $P_{X|S} \colon \mathcal M \to \mathcal X$ describing the statistics of the noise corrupting the original signal. We are also given the distortion measure $\mathsf d \colon \mathcal M \times \widehat {\mathcal M} \mapsto [0, + \infty]$, where $\widehat {\mathcal M}$ is the representation alphabet. An $(M, d, \epsilon)$ code is a pair of random mappings 
$P_{U|X} \colon \mathcal X \mapsto \left\{1, \ldots, M \right\}$ and $P_{Z|U} \colon \left\{1, \ldots, M \right\} \mapsto \widehat {\mathcal M}$ such that $\Prob{ \mathsf d(S, Z) > d} \leq \epsilon$. 

Define
\begin{equation}
\mathbb R_{S, X}(d) \triangleq \inf_{
\substack
{
P_{ Z| X} \colon \\
\E{\bar{\mathsf d}( X,  Z)} \leq d
}
}
I( X;  Z), \label{RR(d)noisy}
\end{equation}
where
$\bar {\mathsf d} \colon \mathcal X \times \widehat {\mathcal M} \mapsto [0, +\infty]$ is given by
\begin{equation}
 \bar {\mathsf d}(x, z) \triangleq \E{ \mathsf d( S, z) |  X = x} \label{dbar}.
\end{equation}

We assume that $\mathbb R_{ S,  X}(d) < \infty$ for some $d$, and we denote
\begin{equation}
 d_{\min} \triangleq \inf \left\{ d\colon ~ \mathbb R_{ S,  X}(d) < \infty \right\} \label{dmin}.
\end{equation}

As in \cite{kostina2011fixed}, assume that the infimum in \eqref{RR(d)noisy} is achieved by some $P_{Z^\star | X}$ such that the constraint is satisfied with equality. Noting that this assumption guarantees differentiability of $\mathbb R_{S, X}(d)$, denote
\begin{align}
\lambda^\star &= - \mathbb R^\prime_{S, X}(d).
\end{align}

\begin{defn}[noisy $\mathsf d$-tilted information] \label{defn:dtilted}
For $d > d_{\min}$, the noisy $\mathsf d$-tilted information in $s \in \mathcal M$ given observation $x \in \mathcal X$ and representation $z \in \widehat {\mathcal M}$ is defined as
 \begin{align}
 \tilde{\jmath}_{S, X}(s, x, z, d) &\triangleq  
 \imath_{X; Z^\star}(x; z) + \lambda^\star \sd(s, z) - \lambda^\star d, \label{dtiltednoisy}
\end{align}
where $P_{Z^\star | X}$ achieves the infimum in \eqref{RR(d)noisy}, and the information density is denoted by
\begin{equation}
\imath_{X; Z^\star}(x; z) \triangleq \log \frac{dP_{ Z^\star| X = x}}{d P_{Z^\star}} (z).
\end{equation}
\end{defn}

As we will see, the intuitive meaning of the noisy $\mathsf d$-tilted information is the number of bits required to represent $s$ within distortion $d$ given observation $x$ and representation $z$. 

Taking the expectation of \eqref{dtiltednoisy} with respect to $S$ conditioned on $X$, we obtain $\jmath_X(x, d)$, the $\mathsf d$-tilted information in $x$ for the surrogate noiseless source coding problem:
\begin{align}
\jmath_X(x, d) 
&= \imath_{X; Z^\star}(x; z) + \lambda^\star \bar{\mathsf d}(x, z) - \lambda^\star d \label{dtiltedinfo},
\end{align}
where  
\eqref{dtiltedinfo} holds for $P_{Z^\star}$-a.e. $z$ (\!\cite[Lemma 1.4]{csiszar1974extremum}, \cite[Theorem 2.1]{kostina2013thesis}), which is a consequence of the fact that $Z^\star$ achieves the minimum in the convex optimization problem \eqref{RR(d)noisy}. 

Comparing \eqref{dtiltednoisy} and \eqref{dtiltedinfo}, we observe that
\begin{align}
\tilde \jmath_{S, X}(s, x, z, d) &=  \jmath_X(x, d) + \lambda^\star \sd(s, z) - \lambda^\star \E{\mathsf d(S, z) | X = x}  \label{dtiltedinfoa}
\end{align}

 From \eqref{dtiltednoisy} and \eqref{dtiltedinfoa}, we obtain
\begin{align}
\mathbb R_{S, X}(d) &= \E{ \tilde{\jmath}_{S, X}(S, X, Z^\star, d)}\\
&=  \E{ {\jmath}_{X}(X, d)}. 
\end{align}

%

To conclude the introduction to noisy $\sd$-tilted information, we discuss its relationship to a variation in rate-distortion function due to a small variation in probability distribution $P_{S X Z^\star}$. Assume that the alphabets $\mathcal M$, $\mathcal X$ and $\widehat {\mathcal M}$ are finite. 
Fix $d$. In order to rigorously define a derivative of $\mathbb R_{S, X}(d)$ with respect to $P_{S X Z^\star}$, we consider a finite measure on $\mathcal M \times \mathcal X \times \widehat {\mathcal M}$:
\begin{equation}
 Q_{ S  X  Z} = Q_{ S}(s) Q_{ X |  S}(x | s) Q_{ Z |  X}(z | x),
\end{equation}
which is not necessarily a probability measure. 
Furthermore, using the probability measures: 
\begin{align}
 P_{\bar S}(s) &= \frac{Q_{ S}(s)}{\sum_s Q_{ S}(s)}\\
 P_{\bar X | \bar S}(x | s) &= \frac{Q_{ X |  S}(x|s)}{\sum_{x \in \mathcal X} Q_{ X |  S}(x|s)}\\
 P_{\bar Z | \bar X}(z | x) &= \frac{Q_{ Z |  X}(z|x)}{\sum_{z \in \widehat{\mathcal M}} Q_{ Z |  X}(z|x)},
\end{align}
we introduce the following function of $Q_{ S  X  Z}$: 
\begin{equation}
 F(Q_{ S  X  Z}, d) \triangleq I(\bar X; \bar Z) + \lambda(P_{\bar S\bar X}) \left( \E{\sd (\bar S, \bar Z)} -  d \right), 
\end{equation}
where $\lambda(P_{\bar S\bar X}) = - \mathbb R^\prime_{\bar S, \bar X}(d)$ is a differentiable function of $P_{\bar S \bar X} = P_{\bar S} P_{\bar X | \bar S}$. In particular, $ F(P_{\bar S\bar X\bar Z^\star})  = \mathbb R_{\bar S, \bar X}(d)$, where $\bar Z^\star$ attains $R_{\bar S, \bar X}(d)$.
Denote the partial derivatives with respect to the coordinate at $(s, x, z)$: 
\begin{equation}
\dot{\mathbb R}_{S, X}(s, x, z, d) \triangleq   \left. \frac{\partial F(Q_{ S  X  Z}, d) }{\partial  Q_{ S  X  Z}(s, x, z) }   \right |_{Q_{ S  X  Z} = P_{S X Z^\star}} \label{eq:fdot}
\end{equation}

The following theorem demonstrates that the noisy $\mathsf d$-tilted information exhibits properties similar to those of the $\mathsf d$-tilted information listed in \cite[Theorem 2.2]{kostina2013thesis} (reproduced in Theorem \ref{thm:ingber} in Appendix \ref{appx:aux}). 
\begin{thm}
Assume that $\mathcal M$, $\mathcal X$ and $\widehat {\mathcal X}$ are finite sets, and let $d > d_{\min}$.  
Then,
\begin{align}
\dot{\mathbb R}_{S, X}(s, x, z, d)
 &= \tilde \jmath_{S, X}(s, x, z, d) - \mathbb R_{S, X}(d) \label{nsc:dif2},\\
 \Var{ \dot{\mathbb R}_{S, X}(S, X, Z^\star, d)} &= \Var{\tilde \jmath_{S, X}(S, X, Z^\star, d)}. \label{nsc:ingber}
\end{align}
where the variances are with respect to 
$(S,X, Z^\star) \sim P_{S} P_{X|S} P_{Z^\star | X}$. 
\label{thm:nsc:ingber}
\end{thm}
\begin{proof}
See Appendix \ref{appx:ingber}. 
\end{proof}

\apxonly{
WE DON"T REALLY NEED THIS. 

For a given distribution $P_{\bar Z}$ on $\widehat {\mathcal M}$ and $\lambda > 0$ define the transition probability kernel 
\begin{equation}
dP_{\bar Z^\star |X = x}(z) = \frac{dP_{\bar Z}( z) \exp\left( -  \lambda \bar{\mathsf d}( x,  z)\right) }{\E{\exp\left( - \lambda \bar {\mathsf d}( x,  \bar Z)\right)}} \label{PZ|Xstar}
\end{equation}

and define the function 
 \begin{align}
 \tilde{J}_{\bar Z}(s, x, \lambda) &\triangleq  
 D(P_{\bar Z^\star | X = x} \| P_{\bar Z}) + \lambda \bar {\mathsf d}_{\bar Z^\star}(s |x) \label{dtiltednoisyg}\\
 &= J_{\bar Z}(x, \lambda)  +  \lambda \bar {\mathsf d}_{\bar Z^\star}(s |x) - \lambda \E{\bar {\mathsf d}_{\bar Z^\star}(S |x) | X = x} 
\end{align}
where 
\begin{equation}
 J_{\bar Z}(x, \lambda) \triangleq \log \frac 1 {\E{ \exp\left( -\lambda \mathsf d(x, \bar Z)\right) } } \label{jtiltedg}
\end{equation}
where the expectation is with respect to the unconditional distribution of $P_{\bar Z}$. Similar to \cite[(2.26)]{kostina2013thesis}, we refer to the function
\begin{equation}
  \tilde{J}_{\bar Z}(s, x, \lambda) - \lambda d
\end{equation}
as the {\it generalized noisy $\mathsf d$-tilted information}. As in the noiseless case, the generalized $\mathsf d$-tilted information turns out to be relevant to the following optimization problem. 
\begin{equation}
\mathbb R_{S, X; \bar Z}(d) \triangleq \inf_{
\substack
{
P_{ Z| X} \colon \\
\E{\bar{\mathsf d}( X,  Z)} \leq d
}
}
D( P_{Z|X} \| P_{\bar Z} | P_X) \label{RR(d)g}
\end{equation}
}

\section{Converse bounds}
\label{sec:C}

Our main converse result for the nonasymptotic noisy lossy compression setting is the lower bound on the excess distortion probability as a function of the code size given in Theorem \ref{thm:Cnoisy}.  
For fixed $P_X$ and an auxiliary conditional distribution $P_{\bar X | \bar Z}$, denote the function 
\begin{align}
 f_{\bar X | \bar Z}(s, x, z) &\triangleq  \imath_{\bar X|\bar Z \| X}(x; z)  + \sup_{\lambda \geq 0} \lambda( \mathsf d(s, z) -  d) 
 - \log M \label{eq:faux},
\end{align}
where
\begin{align}
 \imath_{ \bar X| \bar Z\| X}( x;  z) &\triangleq \log \frac{dP_{\bar X| \bar Z = z}}{dP_{X}}(x) \label{ibar}.
\end{align}
Our converse result can now be stated as follows. 
\begin{thm}[Converse]
 If an $(M,d,\epsilon)$ code exists, then the following inequality must hold:
\begin{align}
\epsilon &\geq \inf_{ \substack { P_{Z|X} \colon\\ \mathcal X \mapsto \widehat{\mathcal M}}}  \sup_{\substack{  P_{\bar X | \bar Z} \colon \\ \widehat{\mathcal M} \mapsto \mathcal X }} \sup_
{\gamma \geq 0} \left\{ \Prob{ f_{\bar X | \bar Z}(S, X, Z) \geq \gamma} - \exp(-\gamma) \right\} \label{Cnoisy},
\end{align}
where the middle supremum is over those $P_{\bar X | \bar Z}$ such that Radon-Nikodym derivative of $P_{\bar X | \bar Z = z}$ with respect to $P_X$ at $x$ exists for $P_{Z|X} P_X$-a.e. $(z, x)$. 
\label{thm:Cnoisy}
\end{thm}
\begin{proof}
Let the encoder and decoder be the random transformations $P_{U|X}$ and $P_{Z|U}$, respectively, where $U$ takes values in $\{1, \ldots, M\}$. 
We have, for any $\gamma \geq 0$,
\begin{align}
&~
\Prob{f_{\bar X | \bar Z}(S, X, Z) \geq \gamma}
\notag\\
=&~ \Prob{  f_{\bar X | \bar Z}(S, X, Z) \geq \gamma, d(S, Z) > d} 
\notag\\
+
&~
 \Prob{ f_{\bar X | \bar Z}(S, X, Z) \geq \gamma, d(S, Z) \leq d} \\
 =&~ 
 \Prob{  d(S, Z) > d} \notag\\
 +&~  \Prob{ \imath_{\bar X|\bar Z \| X}(X; Z)  \geq \log M + \gamma, d(S, Z) \leq d} \label{-Csup}
 \\
 \leq&~ \epsilon + \Prob{ \imath_{\bar X|\bar Z \| X}(X; Z)  \geq \log M + \gamma}\\
 \leq&~\epsilon + \frac{\exp(-\gamma)}{M}\E{ \exp\left( \imath_{\bar X|\bar Z \| X}(X; Z) \right) } \label{-Ca}\\
\leq&~ \epsilon  \label{-Cb}\\
 +&~ \frac{\exp(-\gamma)}{M}   \sum_{u = 1}^M  \int_{z \in \widehat {\mathcal M}} dP_{Z|U}(z|u)   \int_{x \in \mathcal X} dP_{\bar X|\bar Z}(x|z)    \notag\\
=&~ \epsilon + \exp\left(-\gamma\right), 
\end{align}
where
\begin{itemize}
\item \eqref{-Csup} is by direct solution for the supremum; 
\item \eqref{-Ca} is by Markov's inequality; 
 \item \eqref{-Cb} follows by writing out the expectation on the left side with respect to $X - U - Z$ and upper-bounding
\begin{equation}
 P_{U|X}(u|x) \leq 1
\end{equation}
 for every $(x, u) \in \mathcal M \times \left\{ 1, \ldots, M\right\}$. 
 \end{itemize}
Finally, \eqref{Cnoisy} follows by choosing  $\gamma$ and $P_{\bar X| \bar Z}$ that give the tightest bound and $P_{Z|X}$ that gives the weakest bound in order to obtain a code-independent converse. 
\end{proof}
The following bound reduces the intractable (in high dimensional spaces) infimization over all conditional probability distributions $P_{Z|X}$ in Theorem \ref{thm:Cnoisy} to the infimization over the symbols of the output alphabet.   

\begin{cor}
 Any $(M,d,\epsilon)$ code must satisfy
\begin{align}
\epsilon &\geq \sup_{\gamma\geq 0} \sup_{\substack{  P_{\bar X | \bar Z} \colon \\ \widehat{\mathcal M} \mapsto \mathcal X }} \bigg\{ \E{ \inf_{z \in \widehat {\mathcal M}} \Prob{ f_{\bar X | \bar Z}(S, X, z) \geq \gamma | X} } 
\notag\\ &
- \exp(-\gamma)\bigg\} \label{Cnoisya},
\end{align}
and the supremum is over those $P_{\bar X | \bar Z}$ such that Radon-Nikodym derivative of $P_{\bar X | \bar Z = z}$ with respect to $P_X$ at $x$ exists for every $z \in \widehat {\mathcal M}$ and $P_X$-a.e. $x$. 
\label{cor:Cnoisya}
\end{cor}

\begin{proof}
We weaken \eqref{Cnoisy} using
\begin{align}
&~
 \inf_{P_{Z|X}} \Prob{ f_{\bar X | \bar Z}(S, X, Z) \geq \gamma } 
\notag\\
= &~ \E{ \inf_{P_{Z|X}} \Prob{ f_{\bar X | \bar Z}(S, X, Z) \geq \gamma | X} }\\
= &~ \E{ \inf_{z \in \widehat {\mathcal M}} \Prob{ f_{\bar X | \bar Z}(S, X, z) \geq \gamma | X} },
\end{align}
where we used $S - X - Z$. 
\end{proof}

In our asymptotic analysis, we will apply Corollary \ref{cor:Cnoisya} with suboptimal choices of $\lambda$ and $P_{\bar X | \bar Z}$. Note that if $\widehat {\mathcal M} = {\hat {\mathcal S}}^k$, where $\hat {\mathcal S}$ is a finite set, and $P_{\bar X | \bar Z}$ is chosen so that $P_{\bar X | \bar Z = z^k}$ is the same for all $z^k$ of the same type, the computation of the weakened version of the bound in \eqref{Cnoisya} has only polynomial-in-$k$ complexity. 

\begin{remark}
If $\mathrm{supp}(P_{Z^\star}) = \widehat {\mathcal M}$ and $P_{X|S}$ is the identity mapping so that ${\mathsf d}(S, z) = \bar{\mathsf d}(X, z)$ almost surely, for every $z$, then Corollary \ref{cor:Cnoisya} reduces to the noiseless converse in \cite[Theorem 7]{kostina2011fixed} by using \eqref{dtiltedinfo} after weakening \eqref{Cnoisya} with $P_{\bar X | \bar Z} = P_{X | Z^\star}$ and $\lambda = \lambda^\star$. 
\end{remark}

\begin{remark}
In many important special cases, the value of the probability in \eqref{Cnoisya} is the same for all $z$, obviating the need to perform the optimization over $z$ in \eqref{Cnoisya}. Indeed, weakening \eqref{Cnoisya} with $P_{\bar X | \bar Z} = P_{X | Z^\star}$ and $\lambda = \lambda^\star$ and using \eqref{dtiltedinfo}, 
we re-write the expression inside the conditional on $X = x$ probability in \eqref{Cnoisya} as
\begin{align}
 &~\imath_{X; Z^\star}(x; z)  +  \lambda^\star( \mathsf d(S, z) -  d) \notag\\
 =&~ \jmath_X(x, d) + \lambda^\star \left( \mathsf d(S, z) - \bar{\mathsf d}(x, z) \right) \label{eq:Cnoisyasym},
\end{align}
which according to \eqref{dtiltedinfo} holds for all $z \in \widehat {\mathcal M}$ if $\mathrm{supp}(P_{Z^\star}) = \widehat {\mathcal M}$. 
It follows that the conditional (on $X = x$) distribution of \eqref{eq:Cnoisyasym} is uniquely determined by the conditional distribution of the zero-mean random variable $ {\mathsf d}(S, z) - \bar{\mathsf d}(x, z)
$. Therefore, as long as conditional (on $X = x$, for all $x \in \mathrm{supp}(P_{X})$) distribution of $ {\mathsf d}(S, z) - \bar{\mathsf d}(x, z)
$ does not depend on the choice of $z \in \widehat {\mathcal M}$, any choice of $z$ in \eqref{Cnoisya} is as good as any other.  The condition is met, for example, by lossy coding of a Gaussian memoryless source observed through an AWGN channel under the mean-squared error distortion, as well as by coding of an equiprobable source observed through a symmetric channel under a symbol error rate constraint. \label{rem:symmetry}
\end{remark}

\section{Achievability bounds}
\label{sec:A}
Continuing in the single-shot setup, in this section we show the existence of codes of a given size, whose probability of exceeding a given distortion level is bounded explicitly.
\begin{thm}[Achievability]
 For any $P_{\bar Z}$ defined on $\widehat {\mathcal M}$, there exists a deterministic $(M, d, \epsilon)$ code with \footnote{We use the notation $\mathbb P^M[\cdot] = \left(\mathbb P[\cdot]\right)^M$.}
\begin{equation}
\epsilon \leq \int_0^1 \E{ \mathbb P^M\left[ \pi(X, \bar Z) > t  | X  \right] } dt \label{Anoisy},
\end{equation}
where $P_{X \bar Z} = P_X P_{\bar Z}$, 
\begin{equation}
\pi(x, z) = \Prob{\mathsf d(S, z) > d | X = x} \label{pi}. 
\end{equation} 
\label{thm:Anoisy}
\end{thm}
\begin{proof}
The proof invokes a random coding argument. Given $M$ codewords $(c_1, \ldots, c_M)$, the encoder $\mathsf f$ and decoder $\mathsf c$ achieving minimum excess distortion probability attainable with the given codebook operate as follows. Having observed $x \in \mathcal X$, the optimum encoder chooses 
\begin{equation}
i^\star \in \arg\min_{i} \pi(x, c_i), 
\end{equation}
with ties broken arbitrarily, so $\mathsf f(x) = i^\star$ and the decoder simply outputs $\mathsf c(\mathsf f(x)) = c_{i^\star}$. 

The excess distortion probability achieved by the scheme is given by
\begin{align}
\Prob{\mathsf d(S, \mathsf c(\mathsf f(X)) ) > d} &= \E{\pi(X, \mathsf c(\mathsf f(X)) )}\\
&= \int_0^1  \Prob{\pi(X, \mathsf c(\mathsf f(X)) ) > t} dt \\
&= \int_0^1  \E{\Prob{ \pi(X, \mathsf c(\mathsf f(X)) ) > t | X }} dt \label{-Anoisya}.
\end{align}

Now, we notice that
\begin{align}
1\left\{ \pi(x, \mathsf c(\mathsf f(x)) )  > t \right\} &=
1\left\{\min_{i \in 1, \ldots, M} \pi(x, c_i) > t \right\}  \\
&= \prod_{i = 1}^M 1 \left\{ \mathsf \pi(x, c_i) > t\right\},
\end{align}
and we average \eqref{-Anoisya} with respect to the codewords $Z_1, \ldots, Z_M$ drawn i.i.d. from $P_{\bar Z}$, independently of any other random variable, so that 
$P_{X Z_1 \ldots Z_M}  = P_{X} \times P_{\bar Z} \times \ldots \times P_{\bar Z}$, 
to obtain
\begin{align}
 &~
 \int_0^1 \E{ \prod_{i = 1}^M \Prob{ \pi(X, Z_i) > t | X } } dt 
 \\
 = &~ \int_0^1 \E{ \mathbb P^M\left[ \pi(X, \bar Z) > t  | X  \right] } dt \label{-Anoisyb}.
\end{align}
Since there must exist a codebook achieving excess distortion probability below or equal to the average over codebooks, \eqref{Anoisy} follows. 
\end{proof}
\begin{remark}
Notice that we have actually shown that the right side of \eqref{Anoisy} gives the exact minimum excess distortion probability of random coding, averaged over codebooks drawn i.i.d. from $P_Z$. 
\end{remark}

\begin{remark}
In the noiseless case, $S = X$, so almost surely
\begin{equation}
 \pi(X, z) = 1\left\{\mathsf d(S, z) > d\right\},
\end{equation}
 and the bound in Theorem \ref{thm:Anoisy} reduces to the noiseless random coding bound in \cite[Theorem 10]{kostina2011fixed}. 
\end{remark}
The bound in \eqref{Anoisy}, which in itself can be difficult to compute or analyze, can be weakened to obtain the following result in Corollary \ref{cor:AnoisyShannon}. Corollary \ref{cor:AnoisyShannon} generalizes \cite[Theorem 1]{kostina2011fixed} which distills Shannon's method for noiseless lossy compression in the single-shot setup. 

\begin{cor}
For any $P_{Z|X}$, there exists an $(M, d, \epsilon)$ code with 
\begin{align}
 \epsilon &\leq \inf_{\gamma \geq 0} \bigg\{ \Prob{d(S, Z) > d} + \Prob{ \imath_{X; Z}(X; Z) > \log M - \gamma} \notag \\ &
 + e^{-\exp(\gamma)} \bigg\} \label{AnoisyShannon},
\end{align}
where $P_{SXZ} = P_{S} P_{X|S} P_{Z|X}$. 
\label{cor:AnoisyShannon}
\end{cor}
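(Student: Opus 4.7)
The plan is to weaken Theorem~\ref{thm:Anoisy} by selecting the mixture distribution $P_{\bar Z}$ to be the marginal $P_Z$ induced by a candidate $P_{Z|X}$, and then converting the $M$-th power inside \eqref{Anoisy} into the information-density form via a change-of-measure/Markov-style truncation argument. The key identity to exploit is $dP_Z(z)/dP_{Z|X=x}(z) = \exp(-\imath_{X;Z}(x;z))$, which connects the randomly chosen codewords (drawn from $P_Z$) to the joint distribution under which the distortion event has controlled probability.

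First I would fix $P_{Z|X}$, set $P_{\bar Z}=P_Z$ in Theorem~\ref{thm:Anoisy}, and write, for each $x$ and each $t\in[0,1]$,
\[
\Prob{\pi(x,\bar Z)\le t} \;=\; \E{\mathbf 1\{\pi(x,Z)\le t\}\,\exp\!\bigl(-\imath_{X;Z}(x;Z)\bigr)\,\big|\,X=x}.
\]
Restricting the integrand to the subset on which $\imath_{X;Z}(x;Z)\le \log M-\gamma$, where the exponential is at least $e^{\gamma}/M$, yields the lower bound
\[
\Prob{\pi(x,\bar Z)\le t} \;\ge\; \tfrac{e^{\gamma}}{M}\,q_t(x),
\]
where $q_t(x)\triangleq \Prob{\pi(x,Z)\le t,\ \imath_{X;Z}(x;Z)\le \log M-\gamma\,|\,X=x}$.

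Next I would raise to the $M$-th power and apply $(1-a/M)^M\le e^{-a}$ together with the convexity bound $e^{-q e^{\gamma}}\le 1-q+q\,e^{-\exp(\gamma)}$ valid for $q\in[0,1]$, giving
\[
\E{\Prob{\pi(X,\bar Z)>t\,|\,X}^{M}} \;\le\; \E{e^{-q_t(X)\,e^{\gamma}}} \;\le\; 1-\E{q_t(X)} + e^{-\exp(\gamma)}.
\]
A union bound on the complement of the good set gives
\[
1-\E{q_t(X)} \le \Prob{\pi(X,Z)>t} + \Prob{\imath_{X;Z}(X;Z)>\log M-\gamma}.
\]
Integrating this inequality on $t\in[0,1]$ and using $\int_0^1 \Prob{\pi(X,Z)>t}\,dt = \E{\pi(X,Z)} = \Prob{\mathsf d(S,Z)>d}$ (since $\pi(X,Z)\in[0,1]$ and by the tower property) produces precisely the right-hand side of \eqref{AnoisyShannon}; taking the infimum over $\gamma$ and $P_{Z|X}$ completes the argument.

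The only delicate point is justifying the convex-combination inequality in the right direction while $q_t(X)$ is still a random quantity, and then interchanging the integral over $t$ with the expectation over $X$; both are routine (Fubini plus the elementary inequality $e^{-qa}\le 1-q+qe^{-a}$ for $q\in[0,1]$, $a\ge 0$, which follows from convexity of $q\mapsto e^{-qa}$). The rest of the reduction is bookkeeping of the change of measure and a standard union bound, so I expect no substantive obstacle beyond ensuring that the conditioning is handled consistently throughout.
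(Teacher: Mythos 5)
Your proposal is correct and follows essentially the same route as the paper: the paper weakens Theorem~\ref{thm:Anoisy} with $P_{\bar Z}=P_Z$ and invokes the nonasymptotic covering lemma \cite[Lemma 5]{verdu2012multiuser} to get \eqref{-Anoisycover}, and your change-of-measure, truncation at $\imath_{X;Z}\le\log M-\gamma$, $(1-u)^M\le e^{-Mu}$, and convexity steps are precisely the standard proof of that lemma, carried out inline. The only bookkeeping point worth noting is that $\tfrac{e^\gamma}{M}q_t(x)\le 1$ holds automatically because it lower-bounds a probability, so the exponential bound needs no separate case analysis.
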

\begin{proof}
Fix  $\gamma \geq 0$ and transition probability kernel $P_{Z|X}$. Let $P_X \to P_{Z|X} \to P_Z$ (i.e. $P_Z$ is the marginal of $P_{X} P_{Z|X}$), and let $P_{X \bar Z} = P_X P_Z$. We use the nonasymptotic covering lemma \cite[Lemma 5]{verdu2012multiuser} to establish 
\begin{align}
&~
\E{ \mathbb P^M\left[ \pi(X, \bar Z) > t  | X  \right] }  
\notag\\
\leq&~\Prob{ \pi(X, Z) > t }  
+ \Prob{ \imath_{X; Z}(X; Z) > \log M - \gamma } + e^{-\exp(\gamma)} \label{-Anoisycover}.
\end{align}
Applying \eqref{-Anoisycover} to \eqref{-Anoisyb} and noticing that
\begin{align}
 \int_0^1  \Prob{ \pi(X, Z) > t } dt &= \E{\pi(X, Z) } \\
 &= \Prob{\mathsf d(S, Z) > d},
\end{align}
we obtain \eqref{AnoisyShannon}. 
\end{proof}
While an analysis of the bound in Corollary \ref{cor:AnoisyShannon} for memoryless sources leads to a dispersion term which is larger than that in  \eqref{Vnoisydecomp}, it is an easy-to-compute bound which also leads to a simple proof of the achievability of the asymptote in \eqref{Rdnoisy}.

\apxonly{
THIS SEEMS USELESS.
\begin{thm}[Achievability, generalized $\mathsf d$-tilted information]
Suppose that $P_{ Z| X}$ is such that almost surely
\begin{align}
\mathsf d(S, Z) 
 &=  \bar {\mathsf d}_{ Z}(S |X)  \label{AdtiltedAss1}
 \end{align}
Then there exists an $(M,d,\epsilon)$ code with
\begin{align}
 \epsilon \leq \inf_{\gamma, \beta, \delta, P_{\bar Z}}  \Big\{
 &~
 \mathbb E \Big [ \inf_{ \lambda > 0} \Big\{
  \Prob{I = 0 |X} 
  + \Prob{\bar {\mathsf d}_{Z}(S|X) > d |X} 
  \notag\\
  &~
  +
 \left| 1 - \beta \exp \left(- \lambda \delta\right) \Prob{d - \delta \leq \bar {\mathsf d}_{Z}(S | X) \leq d |X, I = 1} \right|^+ | 
   \Big\}
   \Big]  + e^{-\frac M \gamma} 
 \Big\} \label{Adtiltedg}
\end{align} 
where 
\begin{equation}
  I(s, x) \triangleq 1\left\{ D(P_{Z | X = x} \| P_{\bar Z}) + \lambda \bar {\mathsf d}_{Z}(s |x) - \lambda d \leq \log \gamma - \log \beta  \right\}
\end{equation}
\label{thm:Adtiltedg}
\end{thm}

\begin{proof}
The bound in \eqref{Anoisy} implies that for an arbitrary $P_{\bar Z}$, there exists an $(M, d, \epsilon)$ code with 
\begin{align}
\epsilon
\leq
 &~
\int_0^1 \E{ \mathbb P^M\left[ \pi(X, \bar Z) > t  | X  \right] } dt  
\notag \\
\leq
&~
  e^{-\frac M \gamma}  \E {\min \left\{ 1, \gamma \int_0^1 \Prob{ \pi \left(X, \bar Z \right) \leq t  | X } dt \right\} } 
 +
   \int_0^1 \E{ \left| 1 - \gamma \Prob{ \pi(X, \bar Z) \leq t  | X } \right|^+} dt \label{Adtilted1}\\
   \leq
&~
  e^{-\frac M \gamma} 
 +
   \int_0^1 \E{ \left| 1 - \gamma \Prob{ \pi(X, \bar Z) \leq t  | X } \right|^+} dt
\end{align}
where to obtain \eqref{Adtilted1} we applied \cite{polyanskiy2012notes}
\begin{equation}
 (1-p)^M \leq e^{-Mp} \leq e^{-\frac M \gamma} \min(1, \gamma p) + |1 - \gamma p|^+
\end{equation}


Denote for brevity
 \begin{align}
\pi(x) &\triangleq \Prob{\bar {\mathsf d}_{Z}(S|x) > d | X = x},\\
g(s, x) &\triangleq  D(P_{Z | X = x} \| P_{\bar Z}) + \lambda \bar {\mathsf d}_{Z}(S |x) - \lambda d \label{gsk}
\end{align}

The first term in the right side of \eqref{Adtilted1} is upper bounded using the following chain of inequalities. 
\begin{align}
 &~
 \int_0^1 \left| 1 - \gamma \Prob{ \pi(X, \bar Z) \leq t  | X = x} \right|^+ 
 \notag\\
 \leq 
 &~
 \int_0^1 \left| 1 - \gamma \E{\exp\left( - \imath_{Z| X\| \bar Z}(x; Z)  \right) 1\left\{\pi(x, Z) \leq t \right\} | X = x} \right|^+ \\
 =
 &~
  \int_0^1 \left| 1 - \gamma 1\left\{ \pi(x) \leq t \right\} \E{\exp\left( -  \imath_{Z| X\| \bar Z}(x; Z) \right) } \right|^+ \label{Adtilted2} \\  
   =
 &~
 \pi(x) + (1 - \pi(x)) \left| 1 - \gamma \E{\exp\left( -  \imath_{Z| X\| \bar Z}(x; Z) \right) } \right|^+ \\
\leq
 &~
 \pi(x) + (1 - \pi(x)) \left| 1 - \gamma \exp\left( -  D(P_{Z|X = x} \| P_{\bar Z} ) \right) \right|^+ \label{Adtilted3}\\
 \leq
 &~
 \pi(x) + \left| 1 - \gamma \exp\left( -  D(P_{Z|X = x} \| P_{\bar Z} ) \right) \Prob{ \bar {\mathsf d}_{Z}(S | x) \leq d }\right|^+ \\
  \leq
 &~
 \pi(x) + \left| 1 - \gamma \exp\left( -  D(P_{Z|X = x} \| P_{\bar Z} ) \right) \Prob{ d - \delta \leq \bar {\mathsf d}_{Z}(S | x) \leq d }\right|^+ \\
  \leq
 &~
    \pi(x) 
  + 
  \left| 1 - \gamma  \E{\exp\left( -  
  g(S, x)  - \lambda \delta \right) \1 {d - \delta \leq \bar {\mathsf d}_{Z} (S|x) \leq d} }\right|^+ 
   \label{Adtilted4} \\
        \leq
 &~
    \pi(x) 
  + 
 \E{ \left| 1 - \gamma  \E{\exp\left( -  
  g(S, x)  - \lambda \delta \right) \1 {d - \delta \leq \bar {\mathsf d}_{Z} (S|x) \leq d} | I } \right|^+  } 
   \label{Adtilted4a} \\
           \leq
 &~
    \pi(x) 
  + 
     \Prob{ I = 0 | X = x} +
 \E{ \left| 1 - \gamma  \E{\exp\left( -  
  g(S, x)  - \lambda \delta \right) \1 {d - \delta \leq \bar {\mathsf d}_{Z} (S|x) \leq d} | I = 1} \right|^+  } 
   \\
     \leq
  &~
  \pi(x)  
  + 
   \Prob{ I = 0 | X = x}
    + 
    \left| 1 - \beta \exp \left(- \lambda \delta\right) \Prob{d - \delta \leq \bar {\mathsf d}_{Z}(S | x) \leq d \, | \, I = 1 } \right|^+
  \label{Adtilted4b}\\
  \leq
  &~
  \pi(x)  
  + 
   \Prob{ I = 0 | X = x } 
   \notag\\
    + 
    &~
    \left| 1 - \beta \exp \left(- \lambda \delta\right) \Prob{d - \delta  \leq \bar {\mathsf d}_{Z}(S | x) \leq d  \, | I = 1, X = x } \right|^+
  \label{Adtilted5}
\end{align}

\begin{itemize}

 \item in \eqref{Adtilted2} 
we observed using \eqref{AdtiltedAss1} that almost surely 
\begin{equation}
\pi(X, Z) = \pi(X) ;
\end{equation}

\item \eqref{Adtilted3} and \eqref{Adtilted4a} are by Jensen's inequality;

\item to obtain \eqref{Adtilted5}, we bounded 
\begin{align}
\gamma \exp (- g(S, x)   ) \geq 
\begin{cases}
 \beta &\text{ if } I = 1\\
   0 &\text{ otherwise }\\
\end{cases}
\end{align}

\end{itemize}

Taking the expectation of \eqref{Adtilted5} and recalling \eqref{Adtilted1}, \eqref{Adtiltedg} follows. 
\end{proof}

}

The following weakening of Theorem \ref{thm:Anoisy} is tighter than that in Corollary \ref{cor:AnoisyShannon} and is amenable to a tight second-order analysis.

\begin{thm}[Achievability]
For any $P_{\bar Z}$, there exists an $(M, d, \epsilon)$ code with 
\begin{align}
\epsilon \leq \Prob{  g_{\bar Z}(X, U) \geq \log \gamma  }  +  e^{-\frac M \gamma}  \label{Anoisy1},
\end{align} 
where $U$ is uniform on $[0, 1]$, independent of $X$, and \footnote{We understand that $g_{\bar Z}(x, t) = + \infty$ if $\nexists P_Z \colon \pi(x, Z) \leq t \text{ a.s. } $}
\begin{equation}
g_{\bar Z}(x, t) \triangleq  
 \inf_{P_Z \colon \pi(x, Z) \leq t \text{ a.s.} }  D(P_{Z} \|  P_{\bar Z} ). 
\label{Anoisy1}
\end{equation}

\label{thm:Anoisy1}
\end{thm}

\begin{proof}
The bound in \eqref{Anoisy} implies that for an arbitrary $P_{\bar Z}$, there exists an $(M, d, \epsilon)$ code with 
\begin{align}
\epsilon
\leq
 &~
\int_0^1 \E{ \mathbb P^M\left[ \pi(X, \bar Z) > t  | X  \right] } dt  
\notag \\
\leq
&~
  e^{-\frac M \gamma}  \E {\min \left\{ 1, \gamma \int_0^1 \Prob{ \pi \left(X, \bar Z \right) \leq t  | X } dt \right\} } 
\notag \\
 +
 &~
   \int_0^1 \E{ \left| 1 - \gamma \Prob{ \pi(X, \bar Z) \leq t  | X } \right|^+} dt \\
   \leq
&~
  e^{-\frac M \gamma} 
\notag \\
 +
 &~
   \int_0^1 \E{ \left| 1 - \gamma \Prob{ \pi(X, \bar Z) \leq t  | X } \right|^+} dt \label{Adtilted1},
\end{align}
where $| a |^+ \triangleq \max\{0, a\}$, and  to obtain \eqref{Adtilted1} we applied \cite{polyanskiy2012notes}
\begin{equation}
 (1-p)^M \leq e^{-Mp} \leq e^{-\frac M \gamma} \min(1, \gamma p) + |1 - \gamma p|^+.
\end{equation}
To bound $\left| 1 - \gamma \Prob{ \pi(X, \bar Z) \leq t  | X = x} \right|^+$, we let $P_Z$ be some distribution (chosen individually for each $x$ and $t$) such that $\pi(x; Z)  \leq t$ a.s., and we write
\apxonly{E.g. the distribution that achieves $R_x(t)$ will do but any other is ok too. }
\begin{align}
 &~
\left| 1 - \gamma \Prob{ \pi(X, \bar Z) \leq t  | X = x} \right|^+ 
 \notag\\
 \leq 
 &~
 \left| 1 - \gamma \E{\exp\left( - \imath_{Z \| \bar Z}(Z)  \right) 1\left\{\pi(x, Z) \leq t \right\} | X = x} \right|^+ \label{Acm}\\
 =
 &~
\left| 1 - \gamma \E{\exp\left( -  \imath_{Z \| \bar Z}(Z) \right) } \right|^+ \label{Adtilted2} \\
   \leq
 &~
  \left| 1 - \gamma \exp\left( -  D(P_{Z} \|  P_{\bar Z} ) \right)  \right|^+ \label{Adtilted3}\\
 \leq 
 &~
 1\{ D(P_{Z} \|  P_{\bar Z} ) > \log \gamma \} \label{Adtilted4},
\end{align}
where \eqref{Acm} is by the change of measure, 
\begin{align}
 \imath_{ \bar Z \| Z}( z) &\triangleq \log \frac{dP_{\bar Z}}{ dP_Z } (z)  \label{ibarZZ},
\end{align}
\eqref{Adtilted2} is by the choice of $P_Z$, \eqref{Adtilted3} is by Jensen's inequality, and \eqref{Adtilted4} follows from
\begin{align}
\gamma \exp\left( -  D(P_{Z} \|  P_{\bar Z} )\right)   \geq 
\begin{cases}
 1 &\text{ if } D(P_{Z} \|  P_{\bar Z} )  \leq \log \gamma \\
   0 &\text{ otherwise. }\\
\end{cases}
\end{align}
\end{proof}

\apxonly{
How we will apply it: 
Choose $P_{Z_t | X = x}$ equiprobable on the conditional type that achieves $R_{x, \bar Z}(d + \delta)$ where $\delta = \sqrt{\frac{V}{k}} \Qinv{t}$. Argue that $\pi(x; Z_t)  \leq t$ and that
$D(P_{Z_t | X = x} \|  P_{\bar Z} )   \approx R_{x, \bar Z}(d + \delta) \approx R_{x, \bar Z}(d) + \lambda^\star \delta$. Integrate wrt  $t$. }

\section{Asymptotic analysis}
\label{sec:2order}
In this section, we pass from the single shot setup of Sections \ref{sec:C} and \ref{sec:A} to a block setting by letting the alphabets be Cartesian products $\mathcal M = \mathcal S^k$, $\mathcal X = \mathcal A^k$, $\widehat {\mathcal M} = \hat {\mathcal S}^k$, and we study the second order asymptotics in $k$ of $M^\star(k, d, \epsilon)$, the minimum achievable number of representation points compatible with the excess distortion constraint $\Prob{ \mathsf d(S^k, Z^k) > d } \leq \epsilon$.  We make the following assumptions. 


\begin{enumerate}[(i)]

\item $P_{S^k X^k} = P_{\mathsf S}P_{\mathsf X | \mathsf S} \times \ldots \times  P_{\mathsf S} P_{\mathsf X | \mathsf S} $ and 
\begin{equation}
\mathsf d(s^k, z^k) = \frac 1 k \sum_{i = 1}^k \mathsf d(s_i, z_i).
\end{equation} \label{item:1}
\item The alphabets $\mathcal S$, $\mathcal A$, $\hat {\mathcal S}$ are finite sets.  
\item The distortion level satisfies $d_{\min} < d < d_{\max}$, where $d_{\min}$ is the minimum $d \geq 0$ such that $\mathbb R_{\mathsf S, \mathsf X}(d)$ is finite (as in \eqref{dmin}),
 and $d_{\max} =\inf_{\mathsf z \in \hat {\mathcal S}} \E{{\mathsf d}(\mathsf S, \mathsf z)}$, where the expectation is with respect to the unconditional distribution of $\mathsf S$. 
\item Let  $P_{\mathsf S \bar {\mathsf X} } = P_{\mathsf S | \mathsf X} P_{\bar {\mathsf X}}$. In a neighborhood of $P_{\mathsf X}$, $\mathrm{supp}(P_{\bar {\mathsf Z}^\star}) = \mathrm{supp}(P_{{\mathsf Z}^\star})$, where $\mathbb R_{\mathsf  S, \bar {\mathsf  X}}(d) = I(\bar{\mathsf  X}; \bar{\mathsf  Z}^\star)$, and $\mathbb R_{\mathsf S, \bar {\mathsf X}}(d)$ is twice continuously differentiable as a function of $P_{\bar {\mathsf X}}$. \label{item:5}
\end{enumerate}

The following result is obtained via a second order analysis of Corollary \ref{cor:Cnoisya} (Appendix \ref{appx:2ordernoisyC}) and Theorem \ref{thm:Anoisy1} (Appendix \ref{appx:2ordernoisyA}).
\begin{thm}[Gaussian approximation]
For $0 < \epsilon < 1$, under assumptions \eqref{item:1}--\eqref{item:5}, the minimum
number of representation points compatible with a given probability $\epsilon$ of exceeding distortion $d$ at blocklength $k$
satisfies
\begin{align}
\log M^\star(k, d, \epsilon) &= k R(d) + \sqrt {k \mathcal {\tilde V}(d)}\Qinv{\epsilon} + \bigo{\log k} \label{2ordernoisy},
\end{align}
where
\begin{align}
\tilde{\mathcal V}(d) &= \Var{\tilde{\jmath}_{\mathsf S; \mathsf X}(\mathsf S, \mathsf X, \mathsf Z^\star, d)} \label{Vtilde}
\end{align}
is the noisy rate-dispersion function. 
\label{thm:2ordernoisy}
\end{thm}
The rate-dispersion function of the surrogate noiseless problem is given by (see \cite[(83)]{kostina2011fixed})
\begin{equation}
\mathcal V(d) = \Var{\jmath_{\mathsf X}(\mathsf X, d)}, 
\end{equation}
where $\jmath_{\mathsf X}(\mathsf X, d)$ is defined in \eqref{dtiltedinfo}. 
The following proposition implies that $\tilde{\mathcal V}(d) > \mathcal V(d)$ unless there is no noise. 
\begin{prop}
The noisy rate-dispersion function in \eqref{Vtilde} can be written as
\begin{align}
   \tilde {\mathcal V}(d)  = \Var{\jmath_{\mathsf X}(\mathsf X, d)}  +  \lambda^{\star 2} \Var{ \sd(\mathsf S, \mathsf Z^\star) | \sX, {\mathsf Z}^\star}. \label{Vnoisydecomp1}
\end{align}
\end{prop}
\begin{proof}
Note that 
the expression in \eqref{Vnoisydecomp1} is just an equivalent way to write the decomposition  \eqref{Vnoisydecomp}.  
To verify that the decomposition \eqref{Vnoisydecomp} indeed holds, write, using \eqref{Vtilde} and \eqref{dtiltedinfo}, 
\begin{align}
\tilde{\mathcal V}(d)
 &= \Var{\jmath_{\mathsf X}(\mathsf X, d) 
 + \lambda^\star \sd(\mathsf S, \mathsf Z) 
  - \lambda^\star \E{\mathsf d(\mathsf S, \mathsf Z^\star) | \mathsf X, \mathsf Z^\star} }\\
 &= \Var{\jmath_{\mathsf X}(\mathsf X, d)} 
 + \lambda^{\star 2} \Var{ \sd(\mathsf S, \mathsf Z^\star) | \mathsf X, \mathsf Z^\star}
  \notag\\
 &+ 2 \lambda^{\star} \mathrm{Cov} \left( \jmath_{\mathsf X}(\mathsf X, d), \sd(\mathsf S, \mathsf Z) - \E{\mathsf d(\mathsf S, \mathsf Z) | \mathsf X, \mathsf Z} \right), 
\end{align}
where the covariance is zero: 
\begin{align}
&~ \E{ \left( \jmath_{\mathsf X}(\mathsf X, d) - R(d)\right) \left(\sd(\mathsf S, \mathsf Z) - \E{\mathsf d(\mathsf S, \mathsf Z) | \mathsf X, \mathsf Z} \right) }
\notag \\
= &~ 
\E{ \left( \jmath_{\mathsf X}(\mathsf X, d) - R(d)\right) \E{ \left(\sd(\mathsf S, \mathsf Z) - \E{\mathsf d(\mathsf S, \mathsf Z) | \mathsf X, \mathsf Z} \right) |  \mathsf X, \mathsf Z }  }\\
= &~ 0.
\end{align}

\end{proof}

\begin{remark}
Generalizing the observation made by Ingber and Kochman \cite{ingber2011dispersion} in the noiseless lossy compression setting, we note using Theorem \ref{thm:nsc:ingber} that the noisy rate-dispersion function admits the following representation:
\begin{equation}
\tilde{{\mathcal V}} (d) =  \Var{ \dot{\mathbb R}_{\mathsf S, \mathsf X}(\mathsf S, \mathsf X, \mathsf Z^\star, d)} 
\end{equation}
where $\dot {\mathbb R}_{\mathsf S, \mathsf X}$ is defined in \eqref{eq:fdot}. 
\end{remark}

\section{Example: erased fair coin flips}
\label{sec:BES}

\subsection{Erased fair coin flips}
Let a binary equiprobable source be observed by the encoder through a binary erasure channel with erasure rate $\delta$. The goal is to minimize the bit error rate with respect to the source.  For $\frac \delta 2 \leq d \leq \frac 1 2$, the rate-distortion function is given by
\begin{equation}
R(d) = (1-\delta)\left(\log 2 - h\left( \frac{d - \frac \delta 2}{1 - \delta}\right)\right) \label{Rdbes},
\end{equation}
where $h(\cdot)$ is the binary entropy function, and \eqref{Rdbes} is obtained by solving the optimization in \eqref{Rdnoisy} which is achieved by $P_{\mathsf Z^\star}(0) = P_{\mathsf Z^\star}(1) = \frac 1 2$ and 
\begin{equation}
 P_{\mathsf X|\mathsf Z^\star}(a|b) = 
\begin{cases}
  1 - d - \frac \delta 2 & b = a\\
  d - \frac \delta 2 & b \neq a \neq ?\\
  \delta	& a = ?
\end{cases}
\label{eq:PX|Zerased}
\end{equation}
where $a \in \{0, 1, ?\}$ and $b \in \{0, 1\}$, so 
\begin{align}
 \tilde{\jmath}_{\mathsf S,\mathsf X}( \mathsf S, \mathsf X, \mathsf Z^\star, d)
 =&~ \imath_{\mathsf X; \mathsf Z^\star}(\mathsf X; \mathsf Z^\star) + \lambda^\star \sd(\mathsf S, \mathsf Z^\star) - \lambda^\star d \label{jdensityBES}\\
 =&~ - \lambda^\star d + 
\begin{cases}
 \log \frac{2}{1 + \exp( -\lambda^\star)} & \text{w.p. } 1 - \delta \\
\lambda^\star & \text{w.p. } \frac \delta 2\\
0  & \text{w.p. } \frac \delta 2 \label{jBES}
\end{cases}
\end{align}
The rate-dispersion function is given by the variance of \eqref{jBES}:
\begin{align}
\label{dispersionBES}
\tilde {\mathcal V}(d) &= \delta(1-\delta) \log^2 \cosh \left( Ê\frac{\lambda^\star}{2 \log e} \right)  + \frac \delta 4 \lambda^{\star 2}, \\
\lambda^\star &= - R^\prime(d) = \log \frac{1 - \frac \delta 2 - d}{d - \frac \delta 2 } \label{eq:slopeerased}.
\end{align}

 A tight converse result can be obtained by particularizing the bound in Corollary \ref{cor:Cnoisya} with $P_{\bar X^k | \bar Z^k}$ chosen to be a product of $k$ independent copies of \eqref{eq:PX|Zerased} and $\lambda$ chosen to be $k$ times $\lambda^\star$ in \eqref{eq:slopeerased}. As detailed in Remark \ref{rem:symmetry}, due to the symmetry of the erased coin flips setting such a choice would eliminate the need to perform the optimization over $z$ in the right side of \eqref{Cnoisya}, yielding a particularly easy-to-compute bound. 
In \cite{kostina2011fixed}, we showed an even tighter converse result for the erased coin flips setting:  
\begin{thm}[{Converse, BES \cite[Theorem 32]{kostina2011fixed}}]
\label{thm:CBES}
In the erased fair coin flips setting, 
\begin{align}
\epsilon \geq& \sum_{j = 0}^{k} {k \choose j} \delta^{j}(1-\delta)^{k-j}
\notag\\ \cdot&
 \sum_{i = 0}^{j} 2^{-j}{j \choose i} \left[ 1- M^\star(k, d, \epsilon) 2^{-(k-j)}\binosum{k-j}{\lfloor k d - i\rfloor }\right]^+,
\label{CBES}
\end{align}
where 
\begin{equation}
\binosum{k}{j} 
\triangleq \sum_{i = 0}^{j} {k \choose i} \label{binosum},
\end{equation}
with the convention $\binosum{k}{j} = 0$ if $j < 0$ and $\binosum{k}{j} = \binosum{k}{k}$ if $j > k$.
\end{thm}

The following achievability bound from \cite{kostina2011fixed} turns out to be just a particularization of Theorem \ref{thm:Anoisy} to the erased coin flips setting with $P_{\bar Z}$ chosen to be equiprobable on the set of $k$-bit binary strings. 
\begin{thm}[{Achievability, BES \cite[Theorem 33]{kostina2011fixed}}]
\label{thm:ABES}
In the erased fair coin flips setting, 
\begin{align}
\epsilon \leq& \sum_{j = 0}^k {k \choose j} \delta^{j}(1-\delta)^{k-j} 
\notag\\
\cdot& 
\sum_{i = 0}^{j}2^{-j} {j \choose i}\left( 1 - 2^{-(k-j)}\binosum{k-j}{\lfloor kd - i \rfloor} \right)^{M^\star(k, d, \epsilon)}.
\label{ABES}
\end{align}
\end{thm}

\subsection{Erased fair coin flips: surrogate rate-distortion problem}
\label{sec:BESeq}
According to \eqref{dbar}, the distortion measure of the surrogate rate-distortion problem is given by 
\begin{align}
 \bar {\mathsf d}(1, 1) = \bar {\mathsf d}(0, 0) &= 0 \label{dBESeq1},\\
 \bar {\mathsf d}(1, 0) = \bar {\mathsf d}(0, 1) &= 1,\\
 \bar {\mathsf d}(?, 1) = \bar {\mathsf d}(?, 0) &= \frac 1 2  \label{dBESeq2}.
\end{align}
The $\mathsf d$-tilted information is given by taking the expectation of \eqref{jBES} with respect to $\mathsf S$:
\begin{align}
 \jmath_{\mathsf X}(\mathsf X, d)
 =&~ - \lambda^\star d + 
\begin{cases}
 \log \frac{2}{1 + \exp( -\lambda^\star)} & \text{w.p. } 1 - \delta \\
\frac{\lambda^\star}{2} & \text{w.p. }  \delta  
\end{cases}
\label{jBESeq}
\end{align}
Its variance is equal to
\begin{align}
\label{dispersionBESeq}
 {\mathcal V}(d) &= \delta(1-\delta) \log^2  \cosh \left( Ê\frac{\lambda^\star}{2 \log e} \right) \\
 &= \tilde {\mathcal V}(d) - \frac \delta 4 \lambda^{\star 2}. 
\end{align}

Achievability and converse bounds for the ternary source with binary representation alphabet and the distortion measure in \eqref{dBESeq1}--\eqref{dBESeq2} are obtained as follows. 

\begin{thm}[Converse, surrogate BES]
\label{thm:CBESeq}
Any $(k, M,d,\epsilon)$ code must satisfy
\begin{align}
\epsilon \geq& \sum_{j = 0}^{\lfloor 2 kd \rfloor} {k \choose j} \delta^{j}(1-\delta)^{k-j} \left[ 1- M 2^{-(k-j)}\binosum{k-j}{\lfloor k d - \frac 1 2 j\rfloor }\right]^+.
\label{CBESeq}
\end{align}
\end{thm}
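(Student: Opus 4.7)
The plan is to reduce to a deterministic code---since any stochastic $(P_{U|X}, P_{Z|U})$ is a mixture of deterministic codes and the bound depends only on $M$, it suffices to fix a codebook $\mathcal C = \{c_1, \ldots, c_M\} \subset \{0,1\}^k$. I will upper bound the success probability $\Prob{\bar{\mathsf d}(X^k, Z^k) \leq d}$ by $\Prob{X^k \in \bigcup_{i=1}^M B_d^{-1}(c_i)}$, where $B_d^{-1}(c) = \{x^k : \bar{\mathsf d}(x^k, c) \leq d\}$ is the preimage of the surrogate-distortion ball, and then exploit the symmetry of the erased-source distribution by conditioning on the erasure pattern of $X^k$.

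The key combinatorial computation is a ball count. For $x^k \in \{0,1,?\}^k$ with erasure set $\mathcal E$ of size $j$, the surrogate distortion in \eqref{dBESeq1}--\eqref{dBESeq2} decomposes as $\bar{\mathsf d}(x^k, z^k) = \tfrac{j}{2k} + \tfrac{1}{k}\, d_H(x^k|_{\mathcal E^c}, z^k|_{\mathcal E^c})$, where $d_H$ is Hamming distance. Hence the constraint $\bar{\mathsf d}(x^k, z^k) \leq d$ reduces to a Hamming-ball condition on the $k-j$ non-erased positions with integer radius $\lfloor kd - \tfrac{1}{2}j\rfloor$, which becomes negative precisely when $j > \lfloor 2kd\rfloor$. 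Since $P_{\mathsf X}$ is i.i.d.\ with $\Prob{\mathsf X = 0} = \Prob{\mathsf X = 1} = (1-\delta)/2$, conditionally on $\mathcal E$ the source $X^k$ is uniform over the $2^{k-j}$ binary completions of the non-erased positions, so for any fixed codeword $c \in \{0,1\}^k$,
\begin{equation}
\Prob{X^k \in B_d^{-1}(c) \,\big|\, \mathcal E} \;=\; 2^{-(k-j)}\binosum{k-j}{\lfloor kd - \tfrac{1}{2}j\rfloor}.
\end{equation}

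A union bound over the $M$ codewords, followed by truncation at $1$, then yields $\Prob{\text{success}\mid J=j} \leq \min\!\bigl(1,\, M\,2^{-(k-j)}\binosum{k-j}{\lfloor kd - \tfrac{1}{2}j\rfloor}\bigr)$, with $J = |\mathcal E|$, the bound depending on $\mathcal E$ only through $J$. Complementation turns $1 - \min(1,x)$ into $[1-x]^+$, and averaging over $J \sim \mathrm{Binomial}(k,\delta)$ produces exactly the sum in \eqref{CBESeq}, after truncating at $j = \lfloor 2kd\rfloor$ via the convention in \eqref{binosum} that $\binosum{k-j}{\cdot}$ vanishes for negative upper indices; the dropped tail is nonnegative and only weakens the lower bound. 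No step poses a deep obstacle; the main care point is aligning the half-unit erasure distortion with the floor that converts the real radius $kd - \tfrac{1}{2}j$ into an integer Hamming radius, and observing that the conditional uniformity of $X^k$ on its completions is what makes the sphere-covering union bound uniform across all codes.
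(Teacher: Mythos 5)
Your proof is correct and follows essentially the same route as the paper: condition on the erasure count $j$, observe that erasures contribute a fixed $\tfrac{j}{2k}$ to the surrogate distortion so the remaining constraint is a Hamming-ball condition of radius $\lfloor kd-\tfrac{1}{2}j\rfloor$ on the $k-j$ uniformly distributed non-erased bits, and apply a sphere-covering bound there. The only difference is that you derive the bound $M2^{-(k-j)}\binosum{k-j}{\ell}$ from scratch via a union bound over codewords, whereas the paper invokes it as \cite[Theorem 15]{kostina2011fixed}.
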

\begin{proof}
 Fix a $(k, M, d, \epsilon)$ code. 
While $j$ erased bits contribute $ \frac j {2k}$ to the total distortion regardless of the code, the probability that $k-j$ nonerased bits lie within Hamming distance $\ell$ of their representation can be upper bounded using \cite[Theorem 15]{kostina2011fixed}:
\begin{align}
&~
\Prob { (k-j) \bar{\mathsf d}(X^{k-j}, Z^{k-j}) \leq \ell  \mid \text{ no erasures in }X^{k-j} }
\notag\\
\leq&~ M2^{-k+j} \binosum{k-j}{\ell} \label{eq:cbeq1}.
\end{align}
Using \eqref{eq:cbeq1}, the probability that the total distortion is within $d$ is written as 
\begin{align}
&~
\mathbb P \left[ \bar {\mathsf d}(X^k,Z^k )\leq d \right] 
\notag\\
=&~
\sum_{j = 0}^{\lfloor 2 kd \rfloor} \mathbb P[ j \text{ erasures in } X^k] 
\notag \\
\cdot&~
 \mathbb P \left[ (k-j) \bar {\mathsf d}(X^{k-j}, Z^{k-j}) \leq kd - \frac 1 2 j |  \text{ no erasures in }X^{k-j}\right] 
\notag
 \\
\leq &~ \sum_{j = 0}^{\lfloor 2 kd \rfloor} {k \choose j} \delta^{j}(1-\delta)^{k-j} \min\left\{ 1, \ M 2^{-(k-j)} \binosum{k-j}{\lfloor kd - \frac 1 2 j \rfloor }\right\}. 
\end{align}
\end{proof}

\begin{thm}[Achievability, surrogate BES]
\label{thm:ABESeq}
There exists a $(k, M,d,\epsilon)$ code such that
\begin{align}
\epsilon \leq& \sum_{j = 0}^{k} {k \choose j} \delta^{j}(1-\delta)^{k-j} \left( 1 - 2^{-(k-j)}\binosum{k-j}{\lfloor kd - \frac 1 2 j \rfloor} \right)^M.
\label{ABESeq}
\end{align}
\end{thm}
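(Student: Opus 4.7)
The plan is to specialize the random-coding achievability bound of Theorem \ref{thm:Anoisy} to the surrogate noiseless compression problem, which has source $\mathsf X \in \{0,1,?\}$ (with $P_{\mathsf X}(?) = \delta$ and $P_{\mathsf X}(0) = P_{\mathsf X}(1) = (1-\delta)/2$) and distortion $\bar{\mathsf d}$ given by \eqref{dBESeq1}--\eqref{dBESeq2}. As pointed out in the remark following Theorem \ref{thm:Anoisy}, in the noiseless case $\pi(x^k, z^k)$ is $\{0,1\}$-valued, so the inner integral in \eqref{Anoisy} collapses and the bound becomes
\[
\epsilon \leq \inf_{P_{\bar Z^k}} \E{\bigl(1 - \Prob{\bar{\mathsf d}(X^k, \bar Z^k) \leq d \mid X^k}\bigr)^M}.
\]
My choice would be $P_{\bar Z^k}$ uniform on $\{0,1\}^k$, i.e., i.i.d.\ fair coin flips.

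The key structural observation is that the single-codeword covering probability depends on $X^k$ only through the number of erasures $J$. If $X^k$ has $j$ erasures, then by \eqref{dBESeq1}--\eqref{dBESeq2} the erased coordinates contribute exactly $j/2$ to $k\,\bar{\mathsf d}(X^k, \bar Z^k)$ regardless of $\bar Z^k$, while each non-erased coordinate $i$ contributes $\mathbf 1\{X_i \neq \bar Z_i\}$. Hence $\bar{\mathsf d}(X^k, \bar Z^k) \leq d$ if and only if the Hamming distance between the $k-j$ non-erased coordinates of $X^k$ and the corresponding coordinates of $\bar Z^k$ is at most $\lfloor kd - j/2 \rfloor$. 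Under a uniform $\bar Z^k$, those coordinates are i.i.d.\ uniform on $\{0,1\}$, so this Hamming distance is $\mathrm{Binomial}(k-j, 1/2)$ and the covering probability equals $2^{-(k-j)}\binosum{k-j}{\lfloor kd - j/2 \rfloor}$, a quantity depending only on $j$.

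Finally, since the unconditional distribution of $J$ is $\mathrm{Binomial}(k,\delta)$, averaging yields
\[
\epsilon \leq \sum_{j=0}^k \binom{k}{j}\delta^j(1-\delta)^{k-j}\left(1 - 2^{-(k-j)}\binosum{k-j}{\lfloor kd - j/2 \rfloor}\right)^M,
\]
which is exactly \eqref{ABESeq}. The argument rests on the additive decomposition of $\bar{\mathsf d}$ into an erasure contribution that is independent of the codeword and a Hamming contribution, together with the symmetry of uniform random coding; there is no substantive obstacle, only the bookkeeping of replacing the real-valued threshold $kd - j/2$ by $\lfloor kd - j/2\rfloor$ when passing to integer Hamming distances.
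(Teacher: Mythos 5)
Your proposal is correct and follows essentially the same route as the paper: i.i.d.\ uniform random coding on $\{0,1\}^k$, the decomposition of $k\bar{\mathsf d}$ into a codeword-independent erasure contribution $j/2$ plus the Hamming distance on the $k-j$ non-erased coordinates (which is $\mathrm{Binomial}(k-j,1/2)$ under the uniform codeword distribution), and averaging over the binomial number of erasures. The only cosmetic difference is that you reach the ensemble-average bound by specializing Theorem \ref{thm:Anoisy} to the $\{0,1\}$-valued $\pi$, whereas the paper argues the random-coding average directly via \cite[Theorem 16]{kostina2011fixed}; the two are the same computation.
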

\begin{proof}
Consider the ensemble of codes with $M$ codewords drawn i.i.d. from the equiprobable distribution on $\{0,1\}^k$. Every erased symbol contributes $\frac 1 {2k}$ to the total distortion. The probability that the Hamming distance between the nonerased symbols and their representation exceeds $
\ell$, averaged over the code ensemble is found as in \cite[Theorem 16]{kostina2011fixed}:
\begin{align}
&~
\mathbb P \left[ (k-j)\bar {\mathsf d}(X^{k-j}, \mathsf C(\mathsf f(X^{k-j}))) > \ell | \text{ no erasures in }X^{k-j}\right] 
\notag\\
=&~ \left( 1 - 2^{-(k-j)}\binosum{k-j}{\ell}\right)^M,
\end{align}
where $\mathsf C(m)$, $m = 1, \ldots, M$ are i.i.d on $\{0,1\}^{k-j}$.
Averaging over the erasure channel, we have
\begin{align}
&~\mathbb P \left[ \mathsf d(S^k, \mathsf C(\mathsf f(X^k)))) > d \right]\notag\\
=&~\sum_{j = 0}^{k} \mathbb P[ j \text{ erasures in } X^k] 
\notag\\ \cdot &~
\mathbb P \left[ (k-j) \mathsf d(X^{k-j}, \mathsf C(\mathsf f(X^{k-j}))) > kd - \frac 1 2 j \right]  
\notag
\\
= &~ \sum_{j = 0}^{k} {k \choose j} \delta^{j}(1-\delta)^{k-j} \left( 1 - 2^{-(k-j)}\binosum{k-j}{\lfloor kd - \frac 1 2 j\rfloor}  \right)^M.
\end{align}
Since there must exist at least one code whose excess-distortion probability is no larger than the average over the ensemble, there exists a code satisfying \eqref{ABESeq}.
\end{proof}

The bounds in \cite[Theorem 32]{kostina2011fixed}, \cite[Theorem 33]{kostina2011fixed} and the approximation in Theorem \ref{thm:2ordernoisy} (with the remainder term equal to\footnote{It was shown in \cite[Theorem 34]{kostina2011fixed} that the $\bigo{\frac{\log k}{k}}$ term for the erased coin flips lies between $0$ and $\frac {\log k}{2k}$. The same bound on the remainder term can be shown for the surrogate noiseless problem, via an analysis of \eqref{CBESeq} and \eqref{ABESeq}. } $0$ and $\frac {\log k}{2k}$), as well as the bounds in Theorems \ref{thm:CBESeq} and \ref{thm:ABESeq} for the surrogate rate-distortion problem together with their Gaussian approximation, are plotted in Fig. \ref{fig:BES}. 

Note that 
despite the fact that the asymptotically achievable rate in both problems is the same, there is a very noticeable gap between their nonasymptotically achievable rates in the displayed region of blocklengths. For example, at blocklength 1000, the penalty over the rate-distortion function is $9 \%$ for erased coin flips and only $4\%$ for the surrogate source coding problem.


\begin{figure}[htp]
\begin{center}
    \epsfig{file=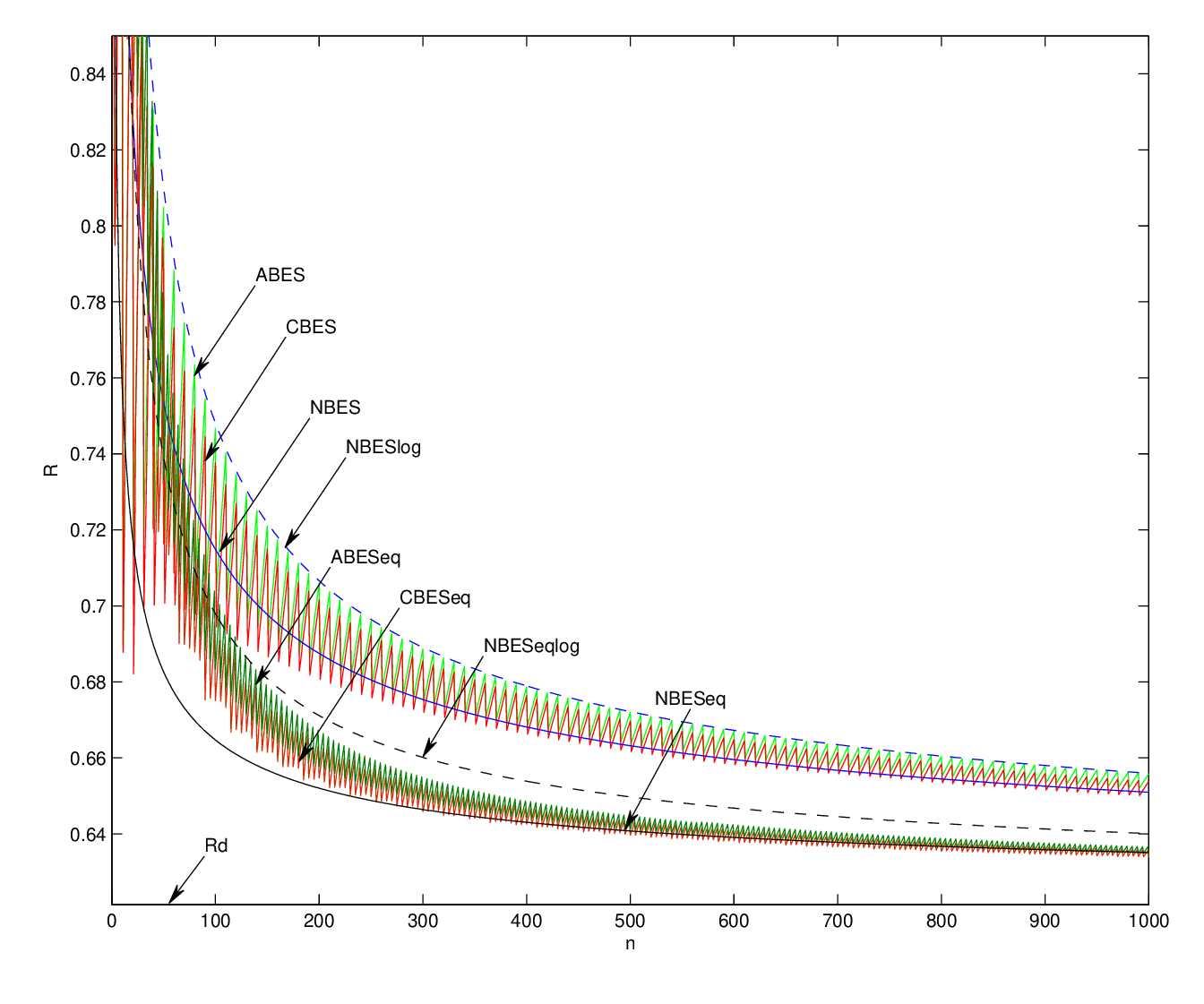,width=1\linewidth }
\end{center}
 \caption[Fair BMS observed through a BEC: rate-blocklength tradeoff. ]{Rate-blocklength tradeoff for the fair coin source observed through an erasure channel, as well as that for the surrogate problem, with $\delta = 0.1$, $d = 0.1$, $\epsilon = 0.1$. } \label{fig:BES}
\end{figure}


\section{Acknowledgement}
We thank the anonymous reviewer for the remarkably thorough review,
which is reflected in the final version.

\appendices
\section{Auxiliary results}
\label{appx:aux}

In this appendix, we state auxiliary results that are instrumental in the proof of Theorem \ref{thm:2ordernoisy}. 

\begin{thm}[{Berry-Esse\'en CLT, e.g. \cite[Ch. XVI.5 Theorem 2]{feller1971introduction}}]
\label{thm:BerryEsseen}
Fix a positive integer $k$. Let $W_i$, $i = 1, \ldots, k$ be independent. Then, for any real $t$
\begin{equation}
\left| \mathbb P \left[ \sum_{i = 1}^k W_i > k \left( \mu_k + t \sqrt {\frac{V_k}{ k}}\right) \right]  - Q(t) \right| \leq \frac {B_k}{\sqrt k},
\label{BerryEsseen}
\end{equation}
where
\begin{align}
\mu_k &= \frac 1 k \sum_{i = 1}^k \E{ W_i}  \label{DkBerryEsseen},\\
V_k &= \frac 1 k \sum_{i = 1}^k \Var{W_i} \label{VkBerryEsseen},\\
T_k &= \frac 1 k \sum_{i = 1}^k \E{ |W_i - \mu_i|^3 } \label{TkBerryEsseen},\\
B_k &= \frac{c_0 T_k}{V_k^{3/2}} \label{BkBerryEsseen},
\end{align}
and $0.4097 \leq c_0 \leq 0.5600$ ($0.4097 \leq c_0 < 0.4784$ for identically distributed $W_i$) \cite{shevtsova2010improvement,shevtsova2011absolute}. 
\end{thm}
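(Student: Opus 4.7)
The statement is the classical (non-identically distributed) Berry-Esseen central limit theorem, so my plan is to follow Esseen's original characteristic-function argument, which isolates a smoothing step from a Taylor-expansion step and is what produces an absolute constant of the form $c_0$ in the final bound.

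First I would standardize: set $S_k = \sum_{i=1}^k W_i$, $Y_i = (W_i - \mathbb{E}[W_i])/\sqrt{kV_k}$, and $T_k^\ast = \sum_i Y_i$. Then \eqref{BerryEsseen} is equivalent to $\sup_t |F_{T_k^\ast}(t) - \Phi(t)| \leq c_0 T_k/(V_k^{3/2}\sqrt{k})$. The plan is to control this sup-norm distance via characteristic functions. The first key step is \emph{Esseen's smoothing lemma}: for any distribution function $F$ with characteristic function $\phi$ and the standard normal distribution function $\Phi$ with characteristic function $\gamma(u)=e^{-u^2/2}$, one has, for every $U>0$,
\begin{equation}
\sup_t |F(t) - \Phi(t)| \;\leq\; \frac{1}{\pi}\int_{-U}^{U} \frac{|\phi(u) - \gamma(u)|}{|u|}\,du \;+\; \frac{24}{\pi U \sqrt{2\pi}}.
\end{equation}
I would either quote this inequality from Feller or prove it in a short sublemma by convolving $F-\Phi$ with a smooth kernel whose Fourier transform is compactly supported and using the uniform bound $|\Phi'|\leq 1/\sqrt{2\pi}$.

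The second key step is a Taylor expansion of the characteristic function $\phi_{T_k^\ast}(u) = \prod_{i=1}^k \phi_{Y_i}(u)$. Using $\mathbb{E}[Y_i] = 0$, $\sum_i \mathbb{E}[Y_i^2] = 1$, and the elementary bound $|\phi_{Y_i}(u) - 1 + u^2 \mathbb{E}[Y_i^2]/2| \leq |u|^3 \mathbb{E}[|Y_i|^3]/6$, I would write $\phi_{Y_i}(u) = e^{-u^2 \mathbb{E}[Y_i^2]/2}(1 + r_i(u))$ with $|r_i(u)|$ controlled by $|u|^3 \mathbb{E}[|Y_i|^3]$ on a disk $|u|\leq U$ of radius proportional to $V_k^{3/2}\sqrt{k}/T_k$. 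Multiplying, $\phi_{T_k^\ast}(u) - \gamma(u)$ telescopes into a sum whose absolute value is bounded by $C |u|^3 \rho_k$ times $e^{-u^2/3}$, where $\rho_k = T_k/(V_k^{3/2}\sqrt{k})$, on the chosen disk. The standard lemma $|\prod a_i - \prod b_i| \leq \sum |a_i - b_i|$ for $|a_i|, |b_i| \leq 1$ makes this rigorous.

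The third step plugs the pointwise bound into the smoothing lemma: $\int_{-U}^{U} |u|^2 e^{-u^2/3} du$ is a finite absolute constant, so the first term is $O(\rho_k)$, and choosing $U$ proportional to $1/\rho_k$ makes the second term $O(\rho_k)$ as well. Combining yields $\sup_t |F_{T_k^\ast}(t) - \Phi(t)| \leq c_0 \rho_k$ for an explicit $c_0$. The main obstacle is \emph{not} the structural argument but the tight tracking of constants needed to assert $c_0 \leq 0.5600$ in general and $c_0 < 0.4784$ in the i.i.d.\ case; the sharper i.i.d.\ constant requires an extra refinement in the Taylor expansion (using a fourth-order term when it exists, or exploiting the uniform third-moment ratio) together with a careful choice of $U$. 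For purposes of this paper I would simply cite Feller's constants rather than reproduce the optimization, since only the $O(1/\sqrt{k})$ scaling is used in the dispersion analysis.
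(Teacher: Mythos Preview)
Your sketch is the standard Esseen characteristic-function proof and is correct in outline, but you should be aware that the paper does not prove Theorem~\ref{thm:BerryEsseen} at all: it is placed in the appendix of \emph{auxiliary results} and simply quoted from \cite[Ch.~XVI.5 Theorem~2]{feller1971introduction}. So there is nothing to compare against; the paper's ``proof'' is the citation.

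Given that, your plan goes well beyond what is needed here, and your own closing remark is exactly right: only the $O(1/\sqrt{k})$ scaling with the Berry--Esseen ratio $T_k/V_k^{3/2}$ is ever used in the dispersion analysis (Appendices~\ref{appx:2ordernoisyC} and~\ref{appx:2ordernoisyA}), never the specific numerical range of $c_0$. If you do wish to include a self-contained argument, the three-step structure (smoothing lemma, Taylor expansion of $\prod_i \phi_{Y_i}$ with the telescoping product bound, then optimizing $U\asymp 1/\rho_k$) is the canonical one and would go through; just be explicit that the stated constants $0.4097\le c_0\le 0.5600$ (and the sharper i.i.d.\ bound) are imported from the literature rather than recovered by your computation, since the naive constant tracking in the sketch yields something noticeably larger.
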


The second result deals with minimization of the cdf of a sum of independent random variables.

Let $\mathcal D$ is a metric space with the metric $d \colon \mathcal D^2 \mapsto \mathbb R^+$. Define the random variable $Z$ on $\mathcal D$. Let $W_i, i = 1, \ldots, k$  be independent conditioned on $Z$. Denote
\begin{align}
\mu_k(z) &= \frac 1 k \sum_{i = 1}^k \E{ W_i|Z = z} \label{Dnz}, \\
V_k(z) &= \frac 1 k \sum_{i = 1}^k \Var{W_i| Z = z} \label{Vnz}, \\
T_k(z) &= \frac 1 k \sum_{i = 1}^k \E{ |W_i - \E{W_i} |^3 | Z = z } \label{Tnz}. 
\end{align}

Let $\ell_1$, $\ell_2$, $L_1$, $F_1$, $F_2$, $V_{\min}$ and $T_{\max}$ be positive constants. We assume that there exist $z^\star \in \mathcal D$ and sequences $\mu_k^\star$, $V_k^\star$ such that for all $z \in \mathcal D$,
\begin{align}
  \mu^\star_k - \mu_k(z) 
 &\geq 
 \ell_1 d^2\left(z, z^\star \right) - \frac{\ell_2}{k} \label{ell_1tighter},\\
 \mu^\star_k - \mu_k(z^\star) 
 &\leq 
  \frac{L_1}{k} \label{L_1} ,  \\
\left | V_k(z)  - V_k^\star \right | &\leq F_1 d \left( z, z^\star \right) + \frac{F_2}{k} \label{F_1tighter},\\
V_{\min} &\leq V_k(z)  \label{V_min},\\
 T_k(z) &\leq T_{\max} \label{T_max}.
\end{align}

\begin{thm}[{\hspace{-.1mm}\cite[Theorem A.6.4]{kostina2013thesis}}]
In the setup described above, under assumptions \eqref{ell_1tighter}--\eqref{T_max}, for any $A> 0$, there exists a $K \geq 0$ such that, for all $\left|\Delta\right| \leq A 2 \ell_1 T_{\max}^{\frac 1 3} V_{\min}^{\frac 5 2}F_1^{-2}$ and all  sufficiently large $k$: 
\begin{align}
 \min_{z \in \mathcal D} \Prob{\sum_{i = 1}^k W_i \leq  k\left(\mu^\star_k - \Delta\right)| Z = z} 
 &\geq   
 Q\left( \Delta \sqrt{\frac{k} { V^\star_k} }\right)
 -  \frac {K} {\sqrt k} \label{jscc:minprob:4}.
\end{align}

 \label{thm:minprob}
 \end{thm}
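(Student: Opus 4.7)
The plan is to apply Berry-Esseen (Theorem \ref{thm:BerryEsseen}) conditionally on $Z=z$ and then compare the resulting pointwise Gaussian approximation with the nominal value at $z^\star$, using the four regularity hypotheses to control the deviation uniformly in $z$. Setting $m(z):=\mu_k^\star-\mu_k(z)$ and $t_z:=\sqrt{k/V_k(z)}\,(m(z)-\Delta)$, so that $k\mu_k(z)+\sqrt{kV_k(z)}\,t_z=k(\mu_k^\star-\Delta)$, Berry-Esseen yields
\[
\Prob{\sum_{i=1}^k W_i\le k(\mu_k^\star-\Delta)\,|\,Z=z}\;\ge\;Q(-t_z)\;-\;\frac{c_0\,T_k(z)}{\sqrt k\,V_k(z)^{3/2}},
\]
where (\ref{V_min}) and (\ref{T_max}) bound the Berry-Esseen correction by $c_0 T_{\max} V_{\min}^{-3/2}/\sqrt k$, independently of $z$.

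The remaining task is to show $Q(-t_z)\ge Q(\Delta\sqrt{k/V_k^\star})-K'/\sqrt k$ uniformly in $z$. I would split on $\delta:=d(z,z^\star)$, choosing a constant threshold $\delta_0=\delta_0(A,\ell_1,\ell_2,V_{\min},F_1)$ small enough that (\ref{F_1tighter}) keeps $V_k(z)$ within a bounded neighborhood of $V_k^\star$ for $\delta\le\delta_0$, yet large enough that $\ell_1\delta_0^2$ strictly exceeds $2|\Delta|$; the hypothesis $|\Delta|\le 2A\ell_1 T_{\max}^{1/3}V_{\min}^{5/2}F_1^{-2}$ is precisely what makes these two requirements simultaneously feasible. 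In the \emph{far} regime $\delta>\delta_0$, (\ref{ell_1tighter}) gives $m(z)>2|\Delta|$ for all $k$ large, so $m(z)-\Delta\ge |\Delta|+1$; combined with the at-most-linear growth $V_k(z)\le V_k^\star+F_1\delta+F_2/k$, a direct computation shows $-t_z\le -c\sqrt k$ for a constant $c>0$, whence $Q(-t_z)\ge 1-e^{-c^2 k/2}$, far stronger than the required tolerance.

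In the \emph{near} regime $\delta\le\delta_0$, I would decompose
\[
-t_z-\Delta\sqrt{k/V_k^\star}\;=\;\Delta\bigl(\sqrt{k/V_k(z)}-\sqrt{k/V_k^\star}\bigr)\;-\;m(z)\sqrt{k/V_k(z)},
\]
bound the first summand by $|\Delta|\sqrt k(F_1\delta+F_2/k)/(2V_{\min}^{3/2})$ using (\ref{F_1tighter}) and a mean-value estimate for $1/\sqrt x$, and then use the quadratic coupling (\ref{ell_1tighter}) to control the second. The key point is that $\delta^2\le(m(z)+\ell_2/k)/\ell_1$, so either both $m(z)$ and $\delta$ are forced to be $O(1/\sqrt k)$ (in which case both summands are $O(1/\sqrt k)$ and Lipschitz continuity of $Q$ closes the argument), or $m(z)$ is large enough that $-m(z)\sqrt{k/V_k(z)}$ pulls $-t_z$ below $\Delta\sqrt{k/V_k^\star}$, giving $Q(-t_z)\ge Q(\Delta\sqrt{k/V_k^\star})$ outright. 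Combining the two regimes with the Berry-Esseen error yields the claim with $K=c_0 T_{\max}V_{\min}^{-3/2}+K'$.

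The main obstacle will be the near regime: a crude Lipschitz bound on $Q$ is insufficient because $\sqrt k\,|m(z)|$ and $\sqrt k\,\delta$ can each be of order $1$ for $z$ in a $k^{-1/4}$-neighborhood of $z^\star$. The algebraic coupling between the mean gap and the metric provided by (\ref{ell_1tighter}) is what forces joint smallness of both contributions, or else saturation of $Q(-t_z)$ near $1$; the opaque numerical bound on $|\Delta|$ in the hypothesis is exactly what is required to make this trade-off close with an $O(1/\sqrt k)$ remainder.
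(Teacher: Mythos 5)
First, note that the paper itself does not prove Theorem \ref{thm:minprob}: it is imported from \cite[Theorem A.6.4]{kostina2013thesis}, and the proof there follows exactly the route you outline --- Berry-Ess\'een applied conditionally on $Z=z$ (with the correction term made uniform in $z$ via \eqref{V_min} and \eqref{T_max}), followed by a case split on $\delta=d(z,z^\star)$ in which \eqref{ell_1tighter} either drives the conditional probability exponentially close to $1$ (far regime) or couples the mean shift to the variance perturbation \eqref{F_1tighter} (near regime). Your reduction to showing $Q(-t_z)\ge Q(\Delta\sqrt{k/V_k^\star})-K'/\sqrt k$ is the right one, and your far-regime argument is fine; one small point worth recording is that the upper bound on $V_k(z)$ you need there does not require $\mathcal D$ to be bounded, since $V_k(z)\le T_k(z)^{2/3}\le T_{\max}^{2/3}$ by Lyapunov's inequality.

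The near-regime dichotomy does not close as stated. With $v:=-t_z$ and $u:=\Delta\sqrt{k/V_k^\star}$, the first summand of $v-u$ is of order $|\Delta|\sqrt{k}\,\delta$, so even $\delta=O(1/\sqrt k)$ leaves it of order $|\Delta|$, not $O(1/\sqrt k)$; and the compensating term $-m(z)\sqrt{k/V_k(z)}\le-(\ell_1\delta^2-\ell_2/k)\sqrt{k/V_k(z)}$ dominates only when $\delta\gtrsim|\Delta|$. In the window $1/(|\Delta|k)\ll\delta\lesssim|\Delta|$ (nonempty whenever $|\Delta|\gg1/\sqrt k$) neither branch of your dichotomy applies, and maximizing over $\delta$ shows $v-u$ can be as large as $c\,|\Delta|^2\sqrt k+O(1/\sqrt k)$. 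The repair is quantitative rather than structural: write $Q(u)-Q(v)\le(v-u)^+\sup_{s\in[u,v]}\tfrac{1}{\sqrt{2\pi}}e^{-s^2/2}$ and set $y=|\Delta|\sqrt k$, so that the worst-case shift is $O(y^2/\sqrt k)$ while $|u|\asymp y$. Then either $|\Delta|$ exceeds a fixed constant, in which case $v\le\Delta\sqrt{k/V_k(z)}+O(1/\sqrt k)\le-c'\sqrt k$ and both $Q(u)$ and $Q(v)$ are within $e^{-c''k}$ of their limit; or the shift is at most $|u|/2$, in which case the density factor is at most $e^{-c y^2}$ and $Q(u)-Q(v)=O\bigl(\tfrac{1}{\sqrt k}\,y^2e^{-cy^2}\bigr)=O(1/\sqrt k)$ uniformly in $y$ because $\sup_{y\ge0}y^2e^{-cy^2}<\infty$. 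Your closing remark about ``saturation'' gestures at this, but saturation alone ($Q(v)\ge1-e^{-cy^2}$) is not $O(1/\sqrt k)$ for moderate $y$; it is the product of the polynomially large shift with the exponentially small density, carrying an explicit $1/\sqrt k$ prefactor, that closes the bound.
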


The following two theorems summarize some crucial properties of $\mathsf d$-tilted information. It is convenient to introduce the notation 
\begin{equation}
\mathbb R_{X}(d) \triangleq  \mathbb R_{X,X}(d), 
\end{equation}
which is just the usual noiseless rate-distortion function. 

\begin{thm}[{\hspace{-.1mm}\cite[Theorem 2.1]{kostina2013thesis}}]
Fix $d > d_{\min}$. For $P_{Z}^\star$-almost every $z$, it holds that
\begin{align}
 \jmath_{X}(x, d) &=  \imath_{X; Z^\star}(x; z)  + \lambda^\star {\mathsf d}(x, z) - \lambda^\star d  \label{jddensity},
\end{align}
where $\lambda^\star = -\mathbb R_{ X}^\prime(d)$, and  $P_{XZ^\star} = P_{X}P_{Z^\star|X}$. Moreover,
\begin{align}
\mathbb R_{X}(d) &= \min_{P_{Z|X}} \E{ \imath_{X; Z}(X; Z)  + \lambda^\star {\mathsf d}(X, Z)  }    - \lambda^\star d\label{R(d)minj}\\
&= \min_{P_{Z|X}} \E{ \imath_{X; Z^\star}(X; Z)  + \lambda^\star {\mathsf d}(X, Z)  } - \lambda^\star d  \label{R(d)minjstar}\\
&= \E{\jmath_{X}(X, d)}\label{Ejd},
\end{align}
and for all $z \in \widehat{\mathcal M}$
\begin{equation}
\E{ \exp\left\{ \lambda^\star d - \lambda^\star { {\mathsf d}}(X, z) + \jmath_X(X, d)\right\} } \leq 1 \label{csiszar},
\end{equation}
with equality for $P_{Z}^\star$-almost every $z$. 
\label{thm:csiszar}
\end{thm}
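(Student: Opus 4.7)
The plan is to adapt the classical Lagrangian/parametric derivation of the rate-distortion function (Csisz\'ar) to the surrogate distortion $\bar{\mathsf d}$ of \eqref{dbar}. In outline: (i) invoke convex duality at $\lambda^\star = -\mathbb R_X'(d) \geq 0$ to write $\mathbb R_X(d)$ as an unconstrained minimization; (ii) solve that minimization by exponential tilting, pinning down $P_{Z^\star|X=x}$ in terms of the output marginal $P_{Z^\star}$; (iii) read off \eqref{jddensity} as the log-density of the tilted kernel; and (iv) obtain \eqref{csiszar} from first-order optimality of $P_{Z^\star}$ in the outer minimization over the output distribution.

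For step (i), the assumption that the infimum defining $\mathbb R_X(d)$ is attained with the distortion constraint active, combined with convexity of $\mathbb R_X$ in $d$, yields strong duality and hence identity \eqref{R(d)minj}. For step (ii), I would first use $I(X;Z) = \min_{Q_Z} D(P_{Z|X}\|Q_Z|P_X)$ and swap the two minimizations to get
\[ \mathbb R_X(d) + \lambda^\star d = \min_{Q_Z}\min_{P_{Z|X}} \bigl[ D(P_{Z|X}\|Q_Z|P_X) + \lambda^\star \E{\bar{\mathsf d}(X,Z)} \bigr]. \]
The inner minimization is the classical ``relative entropy plus linear functional'' problem whose solution is the Gibbs kernel $dP_{Z|X=x}(z)\propto dQ_Z(z)\exp(-\lambda^\star\bar{\mathsf d}(x,z))$; the outer minimization is attained at $Q_Z = P_{Z^\star}$, so that $dP_{Z^\star|X=x}(z)/dP_{Z^\star}(z) = \exp(-\lambda^\star\bar{\mathsf d}(x,z))/\E{\exp(-\lambda^\star\bar{\mathsf d}(x,Z^\star))}$.

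From this explicit form, for every $z \in \mathrm{supp}(P_{Z^\star})$,
\[ \imath_{X;Z^\star}(x;z) = -\lambda^\star\bar{\mathsf d}(x,z) - \log\E{\exp(-\lambda^\star\bar{\mathsf d}(x,Z^\star))}, \]
and defining $\jmath_X(x,d) := -\log\E{\exp(-\lambda^\star\bar{\mathsf d}(x,Z^\star))} - \lambda^\star d$ rearranges into \eqref{jddensity}. Identity \eqref{Ejd} then follows by averaging \eqref{jddensity} over $(X,Z^\star)$ and invoking $I(X;Z^\star) = \mathbb R_X(d)$ together with $\E{\bar{\mathsf d}(X,Z^\star)} = d$. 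Identity \eqref{R(d)minjstar} is the elementary decomposition $D(P_{Z|X}\|P_{Z^\star}|P_X) = I(X;Z) + D(P_Z\|P_{Z^\star}) \geq I(X;Z)$, with equality achieved by $P_{Z|X} = P_{Z^\star|X}$.

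Finally, for the Csisz\'ar inequality \eqref{csiszar}, substituting the closed form of $\jmath_X$ reduces the claim to
\[ \sum_x P_X(x) \frac{\exp(-\lambda^\star\bar{\mathsf d}(x,z))}{\E{\exp(-\lambda^\star\bar{\mathsf d}(x,Z^\star))}} = \sum_x P_X(x)\frac{dP_{Z^\star|X=x}}{dP_{Z^\star}}(z) \leq 1, \]
with equality whenever $P_{Z^\star}(z)>0$. The equality clause is immediate from the marginal consistency $\sum_x P_X(x) P_{Z^\star|X=x}(z) = P_{Z^\star}(z)$ on the support. The inequality for $z \notin \mathrm{supp}(P_{Z^\star})$ is the main technical hurdle, since it cannot be read off from the Gibbs form alone; I would deduce it from first-order optimality of $P_{Z^\star}$ in the outer minimization over $Q_Z$ via the variational perturbation $Q_Z = P_{Z^\star}+\varepsilon(\delta_z - P_{Z^\star})$, whose nonnegative directional derivative of the dual objective at $\varepsilon = 0^+$ translates exactly into the stated bound. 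This boundary-point step, an analogue of \cite[Lemma 1.4]{csiszar1974extremum}, is the one place where genuine convex-analytic care is required.
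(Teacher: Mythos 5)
The paper itself does not prove this theorem: it is quoted verbatim from \cite[Theorem 2.1]{kostina2013thesis}, whose proof (going back to \cite[Lemma 1.4]{csiszar1974extremum}) follows exactly the route you describe — Lagrangian duality at $\lambda^\star = -\mathbb R_X^\prime(d)$ via the supporting line to the convex function $\mathbb R_X(\cdot)$, the Gibbs/exponential-tilting solution of the inner minimization over $P_{Z|X}$ for fixed output distribution, and the one-sided perturbation $Q_Z = P_{Z^\star} + \varepsilon(\delta_z - P_{Z^\star})$ of the convex dual objective to extend \eqref{csiszar} to $z$ outside $\mathrm{supp}(P_{Z^\star})$ — so your proposal is essentially the intended proof and is sound. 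One local correction: under $P_X P_{Z|X}$ one has $\E{\imath_{X;Z^\star}(X;Z)} = D(P_{Z|X}\|P_{Z^\star}|P_X) - D(P_{Z|X}\|P_{Z^\star|X}|P_X)$, not $D(P_{Z|X}\|P_{Z^\star}|P_X)$, so \eqref{R(d)minjstar} does not follow from the decomposition you cite (that decomposition proves \eqref{R(d)minj}); instead \eqref{R(d)minjstar} follows by writing $\imath_{X;Z^\star}(x;z) + \lambda^\star\bar{\mathsf d}(x,z) - \lambda^\star d$ as $\jmath_X(x,d)$ on $\mathrm{supp}(P_{Z^\star})$ via \eqref{jddensity} and lower-bounding it off the support via \eqref{csiszar}, both of which you have already established.
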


The next result establishes a relationship between partial derivatives of $\mathbb R_X(d)$ with respect to coordinates of source distribution  and the $\sd$-tilted information. To properly define differentiation on the simplex, we consider $ \mathbb R_{X}(d) = \mathbb R_{P_X}(d)$ as a function of $\mathcal M$-dimensional probability vector $P_X$. We extend the definition of $\mathbb R_{P_X}(d)$ to $\mathcal M$-dimensional nonnegative vectors $Q_X = \{Q_X(x)\}_{x \in \mathcal M}$ as follows: 
\begin{equation}
 \mathbb R_{Q_X}(d) \triangleq \mathbb R_{P_{\bar X}}(d),
\end{equation}
where $P_{\bar X}(x) = \frac{Q_X(x)}{\sum_{x \in \mathcal M} Q_X(x)}$, so that $P_{\bar X}$ is a probability vector. 
For $a \in \mathcal A$, denote the partial derivatives 
\begin{equation}
\dot{\mathbb R}_X(a, d) =  \left. \frac{\partial }{\partial Q_{X}(a)} \mathbb R_{Q_X}(d) \right |_{Q_{X} = P_X}.
\end{equation}

\begin{thm}[{\hspace{-.1mm}\cite[Theorem 2.2]{kostina2013thesis}}]
Assume that $\mathcal X$ is a finite set. 
$\mathbb R_{\bar X}(d) = I(\bar X; \bar Z^\star)$. Then, 
\begin{align}
\dot{\mathbb R}_X(a, d)
 &= \jmath_X(a, d) - \mathbb R_X(d) \label{dif2},\\
 \Var{ \dot{\mathbb R}_X(X, d)} &= \Var{\jmath_X(X, d)}.  \label{ingber}
\end{align}
\label{thm:ingber}
\end{thm}
\begin{remark}
The equality in \eqref{ingber} (with a slightly different understanding of differentiation with respect to a probability vector coordinate) was first observed in \cite{ingber2008distortion}. 
\end{remark}

\apxonly{
\begin{proof}
Since by the assumption \eqref{jddensity} particularized to $P_{\bar X}$ holds for $P_{Z^\star}$-almost every $z$, we may write
\begin{align}
\E{\jmath_{\bar X}(X, d)} &= \E{ \imath_{\bar X; \bar Z^\star}(X; Z^\star) } - \mathbb R^\prime_{\bar X}(d) \E{ \mathsf d(X, Z^\star) - d}\\
&= \E{ \imath_{\bar X; \bar Z^\star}(X; Z^\star) }
\end{align}
Therefore (in nats)
\begin{align}
&~ \left. \frac{\partial }{\partial P_{\bar X}(a)} \E{ \jmath_{\bar X}(X, d)} \right|_{P_{\bar X} = P_X} \\
=&~ 
\left. \frac{\partial }{\partial P_{\bar X}(a)} \E{ \log P_{\bar X| \bar Z^\star}(X; Z^\star)}  \right|_{P_{\bar X} = P_X} 
-\left. \frac{\partial }{\partial P_{\bar X}(a)} \E{ \log P_{\bar X}(X)}  \right|_{P_{\bar X} = P_X} \\
=&~   \left.\frac{\partial }{\partial P_{\bar X}(a)}  \E{ \frac{P_{\bar X| \bar Z^\star}(X; Z^\star)}{P_{ X|  Z^\star}(X; Z^\star)}}  \right|_{P_{\bar X} = P_X} 
- 
\left. \E{\frac 1 {P_X(X)} \frac{\partial }{\partial P_{\bar X}(a)} P_{\bar X}(X)}  \right|_{P_{\bar X} = P_X} \\
=&~ \left. \frac{\partial }{\partial P_{\bar X}(a)} 1 \right|_{P_{\bar X} = P_X} - 1\\
=&~ - 1
\end{align}
This proves \eqref{dif1}. To show \eqref{dif2}, we invoke \eqref{dif1} to write
\begin{align}
\dot{\mathbb R}_X(a) &=  \frac{\partial }{\partial P_{\bar X}(a)}\E{ \jmath_{\bar X}(\bar X, d)}  \mid_{P_{\bar X} = P_X}\\
 &= \jmath_X(a, d) +  \frac{\partial }{\partial P_{\bar X}(a)}\E{ \jmath_{\bar X} ( X, d)} \mid_{P_{\bar X} = P_X} \\
 &= \jmath_X(a, d) - \log e
\end{align}
Finally, \eqref{ingber} is an immediate corollary to \eqref{dif2}. 
\end{proof}
}

%

\section{Proof of Theorem \ref{thm:nsc:ingber}}
\label{appx:ingber}
Denote for brevity $\bar \lambda = \lambda(P_{\bar S \bar X})$. Theorem \ref{thm:nsc:ingber} is obtained using the chain rule of differentiation, as shown in~\eqref{eq:daa}--\eqref{eq:daaa}~at the bottom of the next page.  
\begin{table*}[b]
\normalsize
\vspace*{4pt}
\hrulefill
\begin{align}
&~\left. \frac{\partial }{\partial Q_{ S  X  Z}(s, x, z) }  \E{ \imath_{\bar X; \bar Z}(\bar X; \bar Z)}  + \bar \lambda \E{\sd (\bar S, \bar Z)} - \bar \lambda d \right|_{Q_{ S  X  Z^\star} = P_{S X Z^\star} } \notag\\
=&~ \tilde \jmath_{S, X}(s, x, z, d)  - \mathbb R_{S, X}(d) 
+ \left. \frac{\partial }{\partial Q_{ S  X  Z}(s, x, z) }  \left(   \E{ \imath_{\bar X; \bar Z}( X;  Z^\star)}  + \bar \lambda\E{ \mathsf d( S, Z^\star)} - \bar \lambda d \right) \right|_{Q_{ S  X  Z} = P_{S X Z^\star} } 
\label{eq:daa}\\
=&~  \tilde \jmath_{S, X}(s, x, z, d) - \mathbb R_{S, X}(d)  + \left. \frac{\partial }{\partial Q_{ S  X  Z^\star}(s, x, z)}   \E{ \imath_{\bar X; \bar Z}( X;  Z^\star)} \right|_{P_{\bar S \bar X \bar Z} = P_{S X Z^\star} }
\end{align}

\begin{align}
\left. \frac{\partial }{\partial Q_{ S  X  Z}(s, x, z)} \E{ \imath_{\bar X; \bar Z}( X;  Z^\star)} \right |_{P_{\bar S \bar X \bar Z^\star } = P_{S X Z^\star}} 
=
&~ 
\left. \frac{\partial }{\partial Q_{ S  X  Z}(s, x, z)} \E{ \log P_{\bar Z | \bar X}(Z^\star | X)} 
 - \E{ \log P_{\bar Z}(Z^\star)}  \right|_{P_{\bar S \bar X \bar Z^\star} = P_{S X Z^\star}} \\
=&~   \left. \frac{\partial }{\partial Q_{ S  X  Z}(s, x, z, d)}  \E{ \frac{P_{\bar Z| \bar X}(Z^\star| X)}{P_{ Z^\star |X}(Z^\star | X)}}   - \E{\frac {P_{\bar X} (\bar X)} {P_X(X)} } \right|_{P_{\bar S \bar X \bar Z^\star} = P_{S X Z}} \\
=&~ 0
\label{eq:daaa}
\end{align}

\end{table*}

\section{Proof of the converse part of Theorem \ref{thm:2ordernoisy}}
\label{appx:2ordernoisyC}


The following concentration is instrumental in the proof of the converse side of the asymptotic Gaussian approximation. 
\begin{lemma}
Let $X_1, \ldots, X_k$ be independent on $\mathcal A$ and distributed according to $P_{\mathsf X}$. For all $k$ and all $\gamma > 0$, it holds that 
\begin{equation}
 \Prob{ \left| \type{X^k} - P_{\mathsf X} \right|^2 > \gamma } \leq 2 |\mathcal A| \exp\left( - \frac k 2 \frac \gamma {|\mathcal A|}\right) \label{eq:yu},
\end{equation} 
where $\left| \cdot \right|$ denotes the Euclidean norm of its $|\mathcal A|$-dimensional vector argument. 
\label{lemma:yu} 
\end{lemma}
\begin{proof}
Observe first that
\begin{align}
&~ \left\{P_{\bar {\mathsf X}} \in \mathcal P \colon \left| P_{\bar {\mathsf X}}  - P_{\mathsf X} \right|^2 > \gamma \right\} \notag\\
&~\subseteq \left\{ P_{\bar {\mathsf X}} \in \mathcal P \colon \exists a \in \mathcal A \colon | P_{\bar {\mathsf X}} (a) - P_{\mathsf X}(a)|^2 > \frac {\gamma}{|\mathcal A|} \right\},
\end{align}
where $\mathcal P$ is the set of all distributions on $\mathcal A$. 
For all $a \in \mathcal A$, it holds by Hoeffding's inequality (cf. Yu and Speed \cite[(2.10)]{yu1993rateofconvergence}) that
\begin{align}
&~ \Prob{ \type{X^k} =P_{\bar {\mathsf X}} \colon  | P_{\bar {\mathsf X}} (a) - P_{\mathsf X}(a)|^2 > \frac {\gamma}{|\mathcal A|}} \notag\\
\leq&~ 2 \exp \left( - \frac k 2 \frac \gamma {|\mathcal A|}\right), 
\end{align}
and \eqref{eq:yu} follows by the union bound. 
\end{proof}

Let $P_{\mathsf Z | \mathsf X} \colon \mathcal A \mapsto \hat {\mathcal S}$ be a stochastic matrix whose entries are multiples of $\frac 1 k$.  We say that the conditional type of $z^k$ given $x^k$ is equal to $P_{\mathsf Z | \mathsf X}$, $\type{z^k | x^k} = P_{\mathsf Z | \mathsf X}$, if the number of $a$'s in $x^k$ that are mapped to $b$ in $z^k$ is equal to the number of $a$'s in $x^k$ times $P_{\mathsf Z | \mathsf X}(b | a)$, for all $(a, b) \in \mathcal A \times \hat {\mathcal S}$.

 Let 
\begin{align}
\log M &= k R(d) + \sqrt {k \tilde{\mathcal V}(d)} \Qinv{\epsilon_k} -  \frac 1 2 \log k -  \log |\mathcal P_{[k]}| \notag\\
&- c|\mathcal A| \log k \label{logMC},
\end{align}
where $\epsilon_k = \epsilon + \bigo{\frac{\log k}{\sqrt k}}$ and constant $c$ will be specified in the sequel, and $\mathcal P_{[k]}$ denotes the set of all conditional $k-$types $\hat {\mathcal S} \to \mathcal A$. Note that $ |\mathcal P_{[k]}| \leq (k+1)^{|\mathcal A| |\mathcal S|}$.

 We weaken the bound in \eqref{Cnoisy} by choosing
\begin{align}
 P_{\bar X^k | \bar Z^k = z^k}(x^k) &= \frac 1 { | \mathcal P_{[k]} | } \sum_{P_{\mathsf X | \mathsf Z} \in \mathcal P_{[k]}} \prod_{i = 1}^k P_{\mathsf X | \mathsf Z = z_i}(x_i) 
   \label{PbarX},\\
  \lambda &= k \lambda(x^k) = k \mathbb R^\prime_{\type{x^k}}(d),\\
  \gamma &= \frac 1 2 \log k \label{Cnoisygamma}.
\end{align}  

By virtue of Theorem \ref{thm:Cnoisy}, the excess distortion probability of all $(M, d, \epsilon^\prime)$ codes where $M$ is that in \eqref{logMC} must satisfy
\begin{align}
\epsilon^\prime &\geq \mathbb E \Big [ \min_{ 
z^k \in \hat {\mathcal S}^k }
 \mathbb P \Big[ \imath_{\bar X^k|\bar Z^k \| X^k}(X^k; z^k)  \notag\\
 &+k \lambda(X^k)  ( \mathsf d(S^k, z^k) -  d)
 \geq \log M + \gamma  | X^k \Big] \Big] -  \exp(-\gamma) \label{-noisya}.
\end{align}
It suffices to prove that the right side of \eqref{-noisya} is lower bounded by $\epsilon$ for $M$ defined in \eqref{logMC}.

For a given pair $(x^k, z^k)$, 
abbreviate 
\begin{align}
\type{x^k} &= P_{\bar{\mathsf X}},\\
\type{z^k | x^k} &= P_{\bar{\mathsf Z} | \bar {\mathsf X}},\\
\lambda(x^k) &= \lambda_{\bar{\mathsf X}}.
\end{align}
For each $x^k$ and $z^k$, denote the independent random variables
\begin{align}
W_i &\triangleq I(\bar{\mathsf X}; \bar{\mathsf Z} )  + \lambda_{\bar{\mathsf X}}   \left( 
 \sd(S_i, z_i) -  d \right), ~ i = 1, \ldots, n.
\end{align}
Since 
\begin{align}
\E{\sum_{i = 1}^k  \sd(S_i, z_i)}  &= k \E{  {\mathsf d} (\mathsf S, \bar{\mathsf Z} ) }, \\
\Var{\sum_{i = 1}^k  \sd(S_i, z_i)} &= k\, \Var{  {\mathsf d} (\mathsf S, \bar{\mathsf Z} )  | \bar{\mathsf X}, \bar {\mathsf Z}  },
\end{align}
where $P_{\mathsf S \bar{\mathsf X} \bar {\mathsf Z}} = P_{\mathsf S | \mathsf X} P_{\bar{\mathsf X} \bar {\mathsf Z}}$, 
in the notation of Theorem \ref{thm:minprob} (where $z = P_{\bar{\mathsf Z} | \bar{\mathsf X}}$ is now a conditional type) we have
\begin{align}
\mu_k(P_{ \bar{\mathsf Z} | \bar{\mathsf X} }) &=  I(\bar{\mathsf X}; \bar{\mathsf Z} )  + \lambda_{\bar{\mathsf X}} \left( \E{  {\mathsf d} (\mathsf S, \bar{\mathsf Z} ) }  - d \right),\\
V_k(P_{\bar{\mathsf Z} | \bar{\mathsf X}}) &= \lambda_{\bar{\mathsf X}}^2 \, \Var{  {\mathsf d} (\mathsf S, \bar{\mathsf Z} )  | \bar{\mathsf X}, \bar {\mathsf Z}  }, \\
T_k(P_{\bar{\mathsf Z} | \bar{\mathsf X}}) &= \lambda_{\bar{\mathsf X}}^3\, \E{ \left|  {\mathsf d} (\mathsf S,\bar{\mathsf Z} ) - \E{ {\mathsf d}(\mathsf S,  \bar{\mathsf Z}) | \bar {\mathsf X}, \bar{\mathsf Z}}\right|^3 }.
\end{align}

We define $P_{\bar{\mathsf X} | \bar {\mathsf Z} }$ through $P_{\bar {\mathsf X}}  P_{\bar{\mathsf Z} |\bar{\mathsf X}} $ and lower-bound the sum in \eqref{PbarX} by the term containing $P_{ \bar{\mathsf X} | \bar{\mathsf Z}}$, concluding that 
\begin{align}
 &~ \imath_{\bar X^k|\bar Z^k \| X^k}(x^k; z^k)  + k \lambda(x^k) ( \mathsf d(S^k, z^k) -  d) \notag\\
 \geq&~ 
  k I( \bar{\mathsf X}; \bar{\mathsf Z} ) + k D(\bar{\mathsf X} \| \mathsf X)
  \notag\\
    &+ \lambda_{\bar{\mathsf X}} \left( \sum_{i = 1}^k  {\mathsf d}(S_i , z_i) -  k d \right) - \log \left| \mathcal P_{[k]} \right|\\
  &= \sum_{i = 1}^k W_i + k D(\bar{\mathsf X} \| \mathsf X) - \log \left| \mathcal P_{[k]}\right|
  \\
&\geq
 \sum_{i = 1}^k W_i - \log \left| \mathcal P_{[k]}\right| \label{str0}.
\end{align}

Weakening \eqref{-noisya}  using \eqref{str0}, we write
\begin{align}
\epsilon^\prime &\geq  \E{ \min_{ 
z^k \in \hat {\mathcal S}^k }
 \Prob{ \sum_{i = 1}^k W_i \geq \log M + \gamma + \log \left| \mathcal P_{[k]}\right|
   | X^k} } \notag\\
   &-  \exp(-\gamma) \label{-noisyb}.
\end{align}
We identify the typical set of channel outputs:
\begin{equation}
 \mathcal T_{k} = \left\{x^k \in \mathcal A^k \colon 
  \left| 
\type{x^k} - P_{\mathsf X}
  \right|^2 \leq |\mathcal A| \frac{\log k}{k}  \right\} \label{typical},
\end{equation}
where $| \cdot |$ is the Euclidean norm. 
Noting that by Lemma \ref{lemma:yu},
\begin{equation}
\Prob{X^k \notin \mathcal T_k} \leq \frac{2 |\mathcal A|}{\sqrt k}, \label{str:typ}
\end{equation}
we proceed to evaluate the minimum in \eqref{-noisyb} for $x^k \in \mathcal T^k$. 

Denote the backward conditional distribution that achieves $\mathbb R_{\bar {\mathsf X}}(d)$ by $P_{\bar {\mathsf X} | \bar {\mathsf Z}^\star}$. 
Write
\begin{align}
 \mu_k(P_{ \bar{\mathsf Z} | \bar{\mathsf X} } ) &= I(\bar{\mathsf X}; \bar{\mathsf Z} )  + \lambda_{\bar{\mathsf X}} \left( \E{ \bar{\mathsf d}(\bar{\mathsf X}, \bar{\mathsf Z}) }  - d \right)\\
 &= \E{\imath_{\bar {\mathsf X}; \bar {\mathsf Z}^\star }(\bar {\mathsf X}; \bar {\mathsf Z}) + \lambda_{\bar{\mathsf X}} \bar{\mathsf d}(\bar{\mathsf X}, \bar{\mathsf Z}) } - \lambda_{\bar{\mathsf X}} d \notag\\
 &+ D\left( P_{\bar{\mathsf X} | \bar{\mathsf Z}} \| P_{\bar{\mathsf X} | \bar{\mathsf Z}^\star} | P_{\bar {\mathsf Z}}\right)\\
 &\geq \mathbb R_{\bar{\mathsf X}} (d) + D\left( P_{\bar{\mathsf X} | \bar{\mathsf Z}} \| P_{\bar{\mathsf X} | \bar{\mathsf Z}^\star} | P_{\bar {\mathsf Z}}\right) \label{-Cn1}\\
  &\geq \mathbb R_{\bar{\mathsf X}} (d) + \frac 1 2 \left| P_{\bar{\mathsf X} | \bar{\mathsf Z}} P_{\bar {\mathsf Z}} - P_{\bar{\mathsf X} | \bar{\mathsf Z}^\star} P_{\bar {\mathsf Z}} \right|^2 \log e  \label{-Cn2a},
\end{align}
where \eqref{-Cn1} is by Theorem \ref{thm:csiszar}, and \eqref{-Cn2a} is by Pinsker's inequality. If $V_k(P_{{\mathsf Z^\star} | {\mathsf X}}) = 0$, there is nothing to prove, as in that case the second term in \eqref{eq:Cnoisyasym} is almost surely $0$ for all $z \in \mathrm{supp}(P_{{\mathsf Z}^\star})$. In the sequel we assume $V_k(P_{{\mathsf Z^\star} | {\mathsf X}}) > 0$. 
Similar to the proof of \cite[(C.50)]{kostina2013thesis}, we conclude that the conditions of Theorem \ref{thm:minprob} are satisfied by $W_i$, $i = 1, \ldots, k$, with $\mu_k^\star = \mu_k(P_{ \bar{\mathsf Z}^\star | \bar{\mathsf X} })$ and $V_k^\star = V_k(P_{ \bar{\mathsf Z}^\star | \bar{\mathsf X} })$. 
Therefore, abbreviating
\begin{equation}
\Delta_k(P_{\bar {\mathsf X}}) \triangleq \log M + \gamma+ \log \left|\mathcal P_{[k]}\right| - k \mathbb R_{\bar{\mathsf X}}(d),
\end{equation}
where $M$ and $\gamma$ were chosen in \eqref{logMC} and  \eqref{Cnoisygamma},
we have
\begin{align}
 &~ \min_{ 
P_{\bar{\mathsf Z} | \bar{\mathsf X} }} \Prob{\sum_{i = 1}^k W_i \geq  \log M + \gamma  + \log \left| \mathcal P_{[k]}\right|
  | \mathrm{type}\left(X^k\right) = P_{\bar {\mathsf X} } } \notag\\
  \geq&~ Q\left( \frac {\Delta_k(P_{\bar{\mathsf X}})} {\lambda_{\bar{\mathsf X}}\sqrt{  k V_k(P_{ \bar{\mathsf Z}^\star | \bar{\mathsf X} })}} \right) - \frac K {\sqrt k},
  \label{str1}
\end{align}
where $K >0$ is that in Theorem \ref{thm:minprob}.  Noting that assumption \eqref{item:5} implies, via \eqref{nsc:ingber}, that $\bar{\mathsf d}_{\bar{\mathsf Z}^\star}(\mathsf s | \mathsf x) $ is continuously differentiable as a function of $P_{\bar {\mathsf X}}$ in a neighborhood of $P_{\mathsf X}$, we can apply a Taylor series expansion in the vicinity of $P_{\mathsf{X}}$ to $ \frac {1} {\lambda_{\bar{\mathsf X}}\sqrt{ V_k(P_{ \bar{\mathsf Z}^\star | \bar{\mathsf X} })}}$ to conclude that for some scalar $a$ and $K_1 > 0$
\begin{align}
&~Q\left( \frac {\Delta_k(P_{\bar{\mathsf X}})} {\lambda_{\bar{\mathsf X}}\sqrt{  k V_k(P_{ \bar{\mathsf Z}^\star | \bar{\mathsf X} })}} \right) \notag\\
\geq&~  
Q\left( \frac {\Delta_k(P_{\bar {\mathsf X}})} {\lambda_{{\mathsf X}}\sqrt{  k V_k(P_{ {\mathsf Z}^\star | {\mathsf X} })}} \left( 1 + a \sqrt{\frac{\log k}{k}}\right) \right)\\
\geq&~ 
Q\left( \frac {\Delta_k(P_{\bar {\mathsf X}})} {\lambda_{{\mathsf X}}\sqrt{  k V_k(P_{ {\mathsf Z}^\star | {\mathsf X} })}} \right) - K_1 {\frac{\log k}{\sqrt k}} \label{str2},
\end{align}
where to obtain \eqref{str2} we used
 \begin{equation}
 Q(x + \xi) \geq Q(x) - \frac{|\xi|^+}{\sqrt{2 \pi}} \label{Qlb},
\end{equation}
with $\xi \sim \frac{\log k}{\sqrt{k}}$, as $\Delta_k(P_{\bar{\mathsf X}}) = \bigo{\sqrt{k \log k}}$ for $x^k \in \mathcal T_k$.
On the other hand, by assumption \eqref{item:5} Taylor's theorem applies to $\mathbb R_{\bar{\mathsf X}}(d)$; thus, there exists $c > 0$ such that
\begin{align}
&~ \mathbb R_{\bar{\mathsf X}}(d) \notag\\
\geq&~ \mathbb R_{\mathsf X}(d) + \sum_{a \in \mathcal A} \left( P_{\bar{\mathsf X}}(a) - P_{\mathsf X}(a)\right) \dot{\mathbb R}_{\mathsf X}(a, d)
- c  \left| P_{\bar{\mathsf X}} - P_{\mathsf X}\right|^2 \\
=&~ \mathbb R_{\mathsf X}(d) 
+ \frac 1 k \sum_{i = 1}^k    \dot{\mathbb R}_{\mathsf X}(X_i, d)  - \E{ \dot{\mathbb R}_{\mathsf X}( \mathsf X, d) } 
- c  \left| P_{\bar{\mathsf X}} - P_{\mathsf X}\right|^2  \\
=&~  \E{    \jmath_{\mathsf X}(\bar{\mathsf X}, d)}  - c  \left| P_{\bar{\mathsf X}} - P_{\mathsf X}\right|^2 \label{-Cn4a}\\
\geq&~  \E{    \jmath_{\mathsf X}(\bar{\mathsf X}, d)}  - c |\mathcal A| \log k \label{-Cn4b},
\end{align}
where \eqref{-Cn4a} uses \eqref{dif2}, and \eqref{-Cn4b} is by the definition \eqref{typical} of the typical set of $x^k$'s. Therefore, introducing the random variable $G \sim \mathcal N(0, 1)$ which is independent of $X^k$, we may write
\begin{align}
 &~Q\left( \frac {\Delta_k(P_{\bar {\mathsf X}})} {\lambda_{{\mathsf X}}\sqrt{  k V_k(P_{ {\mathsf Z}^\star | {\mathsf X} })}} \right) \notag\\
 =&~\Prob{ k \mathbb R_{\bar {\mathsf X}}(d) + \lambda_{\mathsf X} \sqrt{k V_k\left( P_{{\mathsf Z}^\star | \mathsf X}\right) } G \geq \log M + \gamma + \log \left| \mathcal P_{[k]} \right|  }\\
 \geq&~ \Prob{ \E{    \jmath_{\mathsf X}(\bar{\mathsf X}, d)} + \lambda_{\mathsf X} \sqrt{k V_k\left( P_{{\mathsf Z}^\star | \mathsf X}\right) } G \geq \log M + a_k } \label{str4aa},
\end{align}
where $a_k = \bigo{\log k}$ is defined as
\begin{equation}
a_k \triangleq  \gamma + \log \left| \mathcal P_{[k]} \right| + c |\mathcal A| \log k.
\end{equation}

Finally, collecting \eqref{-noisya}, \eqref{str1}, \eqref{str2} and \eqref{str4aa}, we obtain \eqref{str4a}-\eqref{str6}, shown at the bottom of the next page, where

\begin{table*}[!b]
\normalsize
\vspace*{4pt}
\hrulefill
\begin{align}
\epsilon^\prime &\geq  \E{ \min_{ 
z^k \in \hat {\mathcal S}^k }
 \Prob{ \sum_{i = 1}^k W_i \geq \log M + \gamma + \log \left| \mathcal P_{[k]}\right|
   | X^k} \1{X^k \in \mathcal T_k }} 
   - \Prob{X^k \notin \mathcal T_k}  - \exp(-\gamma) \label{str4a}\\
   &\geq
   \Prob{ \sum_{i = 1}^k \jmath_{\mathsf X}(X_i, d) + \lambda_{\mathsf X} \sqrt{k V_k\left( P_{{\mathsf Z}^\star | \mathsf X}\right) } G \geq \log M + a_k, X^k \in \mathcal T_k  } 
   - \Prob{X^k \notin \mathcal T_k}  - \exp(-\gamma) - K_1 {\frac{\log k}{\sqrt k}} \\
   &\geq
      \Prob{ \sum_{i = 1}^k \jmath_{\mathsf X}(X_i, d) + \lambda_{\mathsf X} \sqrt{k V_k\left( P_{{\mathsf Z}^\star | \mathsf X}\right) } G \geq \log M + a_k  } 
   - 2 \Prob{X^k \notin \mathcal T_k}  - \exp(-\gamma)   - K_1 {\frac{\log k}{\sqrt k}}   \label{str4}\\
    &\geq \epsilon_k - \frac{B}{\sqrt k} -  2 \Prob{X^k \notin \mathcal T_k}  - \exp(-\gamma)  - K_1 {\frac{\log k}{\sqrt k}}    \label{str5}\\
    &\geq \epsilon_k - \frac{B + 4 |\mathcal A| + 1 + K_1 \log k}{\sqrt k} \label{str6}
\end{align}

\end{table*}

\begin{itemize}
 \item \eqref{str4} is by the union bound;
 \item  \eqref{str5} is by the Berry-Ess\'een theorem (Theorem \ref{thm:BerryEsseen}), the choice of $M$ in \eqref{logMC}, the observation \eqref{Vnoisydecomp1}, and $B$ is the Berry-Ess\'een ratio;
 \item \eqref{str6} substitutes \eqref{Cnoisygamma} and \eqref{str:typ}.
\end{itemize}
Now, $\epsilon^\prime \geq \epsilon$ follows by letting  in \eqref{logMC}
\begin{equation}
\epsilon_k =  \epsilon +   \frac{B + 4 |\mathcal A| + 1 + K_1 \log k}{\sqrt k}.
\end{equation}

\section{Proof of the achievability part of Theorem \ref{thm:2ordernoisy}}
\label{appx:2ordernoisyA}

The proof consists of an analysis of the random code described in Theorem \ref{thm:Anoisy} with $M$ codewords drawn from the distribution $P_{Z^{k \star}} = P_{\mathsf Z^\star} \times \ldots \times P_{\mathsf Z^\star}$, where $\mathsf Z^\star$ achieves the rate-distortion function $\mathbb R_{\mathsf S, \mathsf X}(d)$. Theorem \ref{thm:Anoisy1} provides a means to study the performance of that code ensemble. Namely, it implies that the averaged over the ensemble excess-distortion probability $\epsilon^\prime$  is bounded by
\begin{align}
\epsilon^\prime &\leq \Prob{  g_{Z^{k \star}}(X^k, U) > \log \gamma } + e^{- \frac M \gamma} \\
&\leq  \Prob{  g_{Z^{k \star}}(X^k, U) > \log \gamma,  U \in I_k, X^k \in \mathcal T_k} \notag\\
&+ \frac {2+ 2 |\mathcal A|} {\sqrt k} + e^{- \frac M \gamma} \label{Anoisyweaken},
\end{align}
where
\begin{align}
I_k \triangleq \left[\frac 1 {\sqrt k}, 1 - \frac 1 {\sqrt k}\right], 
\end{align}
and $\mathcal T_k$ is the typical set of $X^k$ defined in \eqref{typical} so that \eqref{str:typ} holds, and \eqref{Anoisyweaken} is an obvious weakening. 
We let  
\begin{align}
 \log M &= \log \gamma + \log \log_e \sqrt k, \\
 \log \gamma &= k R(d) + \sqrt {k \tilde {\mathcal V}(d)} \Qinv{\epsilon} + \bigo{\log k}, \label{Agamma}
\end{align}
for a properly chosen $\bigo{\log k}$. We will show that $\epsilon^\prime \leq \epsilon$.

 Instead of attempting to compute the infimum in \eqref{Anoisy1} we compute an upper bound to $g(x^k, t)$ by choosing $P_{Z^k}$ individually for each $k$, $t$ and each type $P_{\bar {\mathsf X}}$ of $x^k \in \mathcal T_k$. 
  We let  $P_{Z^k}$ be equiprobable on the conditional type $P_{\bar {\mathsf Z} ^\star | {\mathsf X}}$ that achieves $\mathbb R_{\bar{\mathsf X}; \mathsf Z^\star }(d - \delta )$, formally defined in \eqref{RR(d)g}\footnote{If the probability masses of $P_{\bar {\mathsf Z} ^\star | {\mathsf X}}$ are not divisors of $k$, we take the type closest to $P_{\bar {\mathsf Z} ^\star | {\mathsf X}}$ such that the distortion constraint is not violated. 
 }, where 
\begin{equation}
 \delta = \sqrt{\frac{V_{\bar{\mathsf X} }}{k}} \Qinv{t - \frac{B_{\bar{\mathsf X} }}{\sqrt k}}, 
\end{equation}
and $V_{\bar{\mathsf X} }$, $B_{\bar{\mathsf X}}$ is the normalized variance and the Berry-Esse\'en coefficient of the sum of independent random variables $\sum_{i = 1}^k {\mathsf d} (S_i , z_i)$, where $S_i$ follows the distribution $P_{\mathsf S | \mathsf X = x_i}$. 
If $V_{\bar{\mathsf X}} > 0$, by virtue of the Berry-Esse\'en theorem,
\begin{align}
\pi(x^k; z^k) &= \Prob{ \sum_{i = 1}^k {\mathsf d} (S_i , z_i) > k d ~ | ~ \type{x^k} =  P_{\bar{\mathsf X}} } \\
&\leq t . \label{pibe}
\end{align}
If $V_{\bar{\mathsf X}} = 0$, \eqref{pibe} holds trivially because then ${\mathsf d} (S_i , z_i) = d - \delta$, almost surely.

Denote the following function: 
 \begin{equation}
\mathbb R_{ {\mathsf X}; \mathsf Z^\star}(d) \triangleq \min_{
\substack
{
P_{ \mathsf Z|  {\mathsf X}} \colon \\
\E{\bar {\mathsf d}(  {\mathsf X}, \mathsf Z)} \leq d
}
}
D( P_{\mathsf Z| {\mathsf X}} \| P_{\mathsf Z^\star} | P_{ {\mathsf X}}) \label{RR(d)g},
\end{equation}
where we used the usual notation for conditional relative entropy $D( P_{\mathsf Z| {\mathsf X}} \| P_{\mathsf Z^\star} | P_{ {\mathsf X}}) \triangleq D( P_{\mathsf Z| {\mathsf X}} P_{{\mathsf X}} \| P_{\mathsf Z^\star} P_{{\mathsf X}} )$. 

By the type counting argument and by Taylor's theorem, for all $x^k \in \mathcal T_k$, 
\begin{align}
&~ D(P_{Z^k} \|  P_{\mathsf Z^\star } \times \ldots \times P_{\mathsf Z^\star } ) \notag\\
=&~  k D( P_{\bar {\mathsf Z}^\star | \mathsf X} \|  P_{\mathsf Z^\star} |P_{\bar{\mathsf X}}) + \bigo{\log k} \label{Anoisya} \\
 =&~ k \mathbb R_{\bar {\mathsf X}; \mathsf Z^\star} (d - \delta) + \bigo{\log k}  \\
 =&~ k \mathbb R_{\bar {\mathsf X}; \mathsf Z^\star} (d) + \lambda_{\bar {\mathsf X}}k \delta + \bigo{\log k} \label{Anoisyb}\\
 =&~ \sum_{i = 1}^k  \jmath_{\mathsf X}(x_i, d)  + \lambda_{\mathsf X} k \delta + \bigo{\log k} \label{konto}\\
 =&~ \sum_{i = 1}^k  \jmath_{\mathsf X}(x_i, d)  + \lambda_{\mathsf X} \sqrt{ k V_{\mathsf X} } \Qinv{t}  + \bigo{\log k} \label{Anoisyc},
\end{align}
 where 
\begin{itemize}
\item  \eqref{Anoisya} is by type counting; 
\item \eqref{Anoisyb} where 
\begin{equation}
\lambda_{\bar{\mathsf X}} \triangleq  - \mathbb R_{\bar {\mathsf X}; \mathsf Z^\star}^\prime (d) 
\end{equation}
is by the Taylor theorem, applicable because with finite alphabets and finite distortion measure, $\mathbb R_{\bar {\mathsf X}; \mathsf Z^\star} (d)$ is differentiable with respect to $d$ up to any order \cite{yang1999redundancy}; 
\item \eqref{konto} follows  from the reasoning in \cite[Lemma B.4]{kostina2013thesis}\footnote{Lemma B.4 is proven for $\delta = 0$ but the reasoning still goes through for $ \delta = \bigo{ \sqrt{\frac{\log k}{k}} }$. };
\item \eqref{Anoisyc} is by applying a Taylor expansion to $V_{\bar {\mathsf X}}$  in the vicinity of $P_{\mathsf X}$ and to $\Qinv{t}$ in the vicinity of $t$. Continuous differentiability of  $V_{\bar{\mathsf X}}$
\apxonly{TODO: justify this!} 
as a function of $P_{\bar {\mathsf X}}$ follows from assumption \eqref{item:5} via \eqref{nsc:ingber} and 
\begin{equation}
 \mathbb R_{ \bar {\mathsf X};   {\mathsf Z}^\star}(d) = \mathbb R_{\bar{\mathsf X}}(d) + D(P_{ \bar {\mathsf Z}} \| P_{{\mathsf Z}^\star}),
\end{equation}
where $\bar{\mathsf Z}$ achieves $\mathbb R_{\bar{\mathsf X}}(d) $. 
\end{itemize}

Finally, for scalars $\mu, \gamma$ and for $v > 0$ observe the following.  
\begin{align}
&~ \int_0^1 \1{ \mu + v \Qinv{t} > \gamma} dt \notag\\
=&~ \int_{\infty}^{-\infty} \1{ \mu + v Q(\Qinv{t}) > \gamma} dQ(t) \\
=&~ \frac 1 {\sqrt{2 \pi}} \int_{-\infty}^\infty e^{-\frac{\xi^2}{2}} \1{\mu + v \xi > \gamma }d \xi \\
=&~ \Prob{ \mu + v G > \gamma} \label{Anoisyd},
\end{align}
where $G \sim \mathcal N(0, 1)$. 
Juxtaposing \eqref{Anoisyc} and \eqref{Anoisyd}, we have
\begin{align}
&~ \Prob{  g(X^k, U) > \log \gamma,  U \in I_k, X^k \in \mathcal T_k} 
 \leq
  \int_0^1 dt 
 \\
 \cdot&~\Prob{ \sum_{i = 1}^k \jmath_{\mathsf X}(X_i, d)  + \lambda_{ {\mathsf X}} \sqrt{ k V_{{\mathsf X}} } \Qinv{t} + \bigo{\log k} \geq \log \gamma } \notag\\
=&~ \Prob{ \sum_{i = 1}^k \jmath_{\mathsf X}(X_i, d)  + \lambda_{ {\mathsf X}} \sqrt{ k V_{{\mathsf X}} } G + \bigo{\log k} \geq \log \gamma } \label{Anoisye}.
\end{align}
By the choice of $\gamma$ in \eqref{Agamma} and the Berry-Esse\'en theorem (Theorem \ref{thm:BerryEsseen}), it follows that one can choose $\bigo{\log k} $ so that the right side of \eqref{Anoisyd} is upper bounded by $\epsilon - \frac {3+ 2 |\mathcal A|} {\sqrt k}$. Upon comparison of \eqref{Anoisye} and \eqref{Anoisyweaken}, we conclude that $\epsilon^\prime \leq \epsilon$, as desired.

\bibliographystyle{IEEEtran}
\bibliography{../../ratedistortion}
\end{document}